\numberwithin{equation}{section}
\theoremstyle{plain}
\newtheorem{definition}{Definition}[section]
\newtheorem{Theorem}[definition]{Theorem}
\newtheorem{Proposition}[definition]{Proposition}
\newtheorem{Lemma}[definition]{Lemma}
\theoremstyle{remark}
\newcommand{\R}{\mathbb R}
\newcommand{\N}{\mathbb N}
\newcommand{\dd}{{\rm{d}}}
\newcommand{\im}{{\mathrm{i}}}
\newcommand{\Up}{{\mathcal U}_+}
\newcommand{\up}{u_+}
\newcommand{\lp}{\lambda_+}
\newcommand{\G}{{\mathcal{G}}}
\newcommand{\U}{{\mathcal{U}}}
\newcommand{\V}{{\mathcal{V}}}
\newcommand{\h}{{\mathcal{H}}}
\renewcommand{\d}{{\mathrm{d}}}
\newcommand{\GaG}{\Gamma_\G}
\newcommand{\ep}{\varepsilon}
\newcommand{\eps}{\varepsilon}
\newcommand{\X}{\mathfrak{X}}
\newcommand{\D}{\mathcal{D}}
\newcommand{\pt}{\partial}
\newcommand{\enumlabelformat}{\roman}
\newlength{\thelabelsep}
\newcounter{inlineenum}
\renewcommand{\theinlineenum}{\enumlabelformat{inlineenum}}
\let\epsilon\varepsilon
\let\phi\varphi
\newcommand{\Cinf}{\ensuremath{\mathcal{C}^\infty}}
\newcommand{\E}{\ensuremath{{\mathcal E}}}
\newcommand{\mb}[1]{\ensuremath{\mathbb{#1}}}
\newcommand{\Z}{\mb{Z}}
\newcommand{\EM}{\ensuremath{{\mathcal E}_{\mathrm{M}}}}
\newcommand{\NN}{\ensuremath{{\mathcal N}}}
\newcommand{\lara}[1]{\langle #1 \rangle}
\title{Cut-and-paste for impulsive gravitational waves with $\Lambda$:\\ The mathematical analysis} 
\author{
  Clemens S\"amann$^1$\thanks{{\tt clemens.saemann@maths.ox.ac.uk}},
  Benedict Schinnerl$^2$\thanks{{\tt b.schinnerl@gmail.com}},
  Roland Steinbauer$^2$\thanks{{\tt roland.steinbauer@univie.ac.at}}\ \ and Robert \v{S}varc$^3$\thanks{{\tt robert.svarc@mff.cuni.cz}} \\ \\
  $^1$ Mathematical Institute, University of Oxford,\\
  Andrew Wiles Building, Radcliffe Observatory Quarter, Woodstock Road, Oxford, UK.\\ \\
  $^2$ Faculty of Mathematics, University of Vienna, \\
  Oskar-Morgenstern-Platz 1, 1090 Vienna, Austria. \\ \\
  $^3$ Institute of Theoretical Physics,\\
  Charles University, Faculty of Mathematics and Physics, \\
  V Hole\v{s}ovi\v{c}k\'ach 2, 18000 Prague 8, Czech Republic.\\ \\ 
}
\begin{document}
\maketitle
\begin{abstract}
Impulsive gravitational waves are theoretical models of short but violent bursts of gravitational radiation. They are commonly described by two distinct spacetime metrics, one of local Lipschitz regularity, the other one even distributional. These two metrics are thought to be `physically equivalent' since they can be formally related by a `discontinuous coordinate transformation'. In this paper we provide a mathematical analysis of this issue for the entire class of nonexpanding impulsive gravitational waves propagating in a background spacetime of constant curvature. We devise a natural geometric regularisation procedure to show that the notorious change of variables arises as the distributional limit of a family of smooth coordinate transformations. In other words, we establish that both spacetimes arise as distributional limits of a smooth sandwich wave taken in different coordinate systems which are diffeomorphically related.
\bigskip

  \noindent
  \emph{Keywords:} impulsive gravitational waves, nonlinear distributional geometry, discontinuous coordinate transformation.
  \medskip
  
  \noindent 
  \emph{MSC2010:} 
  83C15, 
  83C35, 
  46F30, 
  83C10, 
  34A36  
   
  \noindent
     \emph{PACS numbers:} 
     04.20.Jb, 
     02.30.Hq, 
\end{abstract}

    


\newpage

\section{Introduction}

Impulsive gravitational waves are exact general relativistic spacetimes 
providing theoretic models of short but strong bursts of gravitational 
radiation. Originally introduced by R.\ Penrose (e.g.\ \cite{Pen:72}) they have 
since attracted the attention of researchers in exact spacetimes (who have 
widely generalised the original class of solutions, see e.g.\ \cite{P:02}), of
theoretical physicists (who have considered quantum scattering and the memory effect in these geometries as well as their astrophysical applications, see e.g.\ \cite{S:18,BH:03}), and of mathematicians (who have used them as relevant key-models in low regularity Lorentzian geometry). For a pedagogical introduction of the various constructions and models as well as their physical properties see \cite[Chapter 20]{GP:09}. 

Here we focus on a fundamental mathematical issue related to the fact that 
impulsive gravitational waves are metrics of low regularity. Indeed, they are 
commonly described by two `forms' of the metric, one (locally 
Lipschitz-)continuous, the other one distributional. The 
`physical equivalence' of these two descriptions has been established in several 
families of these models \emph{in a formal way}, leaving open some quite subtle 
problems in low regularity Lorentzian geometry. In fact, both `forms' of the 
metric are connected via a `discontinuous coordinate transformation' which 
reflects the Penrose junction conditions used to vividly construct these 
spacetimes in the first place. This `scissors and paste' approach \cite{Pen:72} 
was recently generalised to the $\Lambda\not=0$-case \cite{PSSS:19}.

In this work we completely solve these mathematical issues for the class of all 
\emph{nonexpanding} impulsive gravitational waves propagating on 
\emph{backgrounds of constant curvature}, i.e., Minkowski space and the (anti-)de Sitter 
universe. We do so by employing nonlinear distributional geometry, a method 
which is based on regularisation of the rough metrics and which also brings to 
light new insights into the geometry and the physics of these spacetimes.
We build upon the (nonlinear) distributional analysis of the geodesics in 
these spacetimes carried out in \cite{SSLP:16} and \cite{SS:17}. Thereby the current work concludes this long term research effort in a completely satisfactory way.

This article is organised in the following way. In the next section we recall the 
relevant aspects of the class of solutions we are working with. In particular, 
we discuss explicitly the `discontinuous coordinate transformation' thereby 
also fixing our notations and conventions. We also outline how our results 
and methods are related to previous works in the case $\Lambda=0$ \cite{KS:99}.
Then, in Section \ref{sec:ndaog}, we collect the necessary results from the nonlinear distributional analysis of the geodesics in the class of solutions at hand and fix our notation concerning nonlinear distributional geometry. Next, in Section \ref{sec:4} we study a special class of null geodesics, namely the null generators of (anti-)de Sitter space in a $5$-dimensional flat space with impulsive wave which provides us with a natural geometric regularisation of the notorious transformation. In Section \ref{sec:art} we analyse the regularised transformation and show that it is a generalised diffeomorphism in the sense of nonlinear distributional geometry. We close with a discussion of our results in Section \ref{sec:6}.

\section{Nonexpanding impulsive waves with $\Lambda$}\label{sec:2}
Here we describe the class of nonexpanding impulsive waves with an arbitrary value of $\Lambda$. These geometries have come into focus with the landmark work of Hotta and Tanaka \cite{HT:93}, where in analogy with the classical Aichelburg-Sexl approach \cite{AS:71}, the Schwarzschild--\emph{(anti-)de~Sitter} solution is boosted to ultrarelativistic speed to obtain a nonexpanding impulsive gravitational wave generated by a pair of null monopole particles. For an overview of the many more such solutions that have been found since, see e.g.\ \cite[Section 2]{PSSS:15}. 

\subsection{Metric representations and the `discontinuous transformation'}
In conformally flat coordinates these solutions for any value of $\Lambda$ 
take the \emph{distributional} form \cite{PG:99}
\begin{equation}\label{4D_imp}
  \dd s^2= \frac{2\,\dd\eta\,\dd\bar\eta-2\,\dd \U\,\dd 
    \V+2\h(\eta,\bar\eta)\,\delta({\U})\,\dd {\U}^2}
  {[\,1+\frac{1}{6}\Lambda(\eta\bar\eta-{\U}{\V})\,]^2} \,, 
\end{equation}
where $\h$ is a real-valued function and $\delta$ is the 
Dirac-distribution. 
Due to the occurrence of a distributional coefficient, \eqref{4D_imp} lies far beyond the Geroch-Traschen class of metrics \cite{GT:87}, which is defined by possessing Sobolev regularity $W^{1,2}_{\mbox{\small loc}}\cap L^\infty_{\mbox{\small loc}}$. It is known to be the largest class which allows one \emph{in general} to stably define the curvature in distributions (see also \cite{LFM:07,SV:09}). Nevertheless, due to its simple structure the curvature of \eqref{4D_imp} 
can be computed explicitly to give the impulsive Newman-Penrose components $\Psi_4=(1+\frac{1}{6}\Lambda\eta\bar\eta)^2\h_{,\eta\eta}\delta({\U})$, and $\Phi_{22}=(1+\frac{1}{6}\Lambda\eta\bar\eta)\big( (1+\frac{1}{6}\Lambda\eta\bar\eta) \h_{,\eta\bar\eta}+\frac{1}{6}\Lambda(\h-\eta \h_{,\eta}-\bar\eta \h_{,\bar\eta})\big)\delta({\U})$.
\medskip

The corresponding \emph{continuous} form of the metric is given by \cite{P:98,PG:99}
\begin{equation}\label{conti}
  \dd s^2= \frac{2\,|\dd Z+\up(h_{,Z\bar Z}\dd Z+{h}_{,\bar Z\bar Z}\dd\bar 
    Z)|^2-2\,\dd u\dd v}{[\,1+\frac{1}{6}\Lambda(Z\bar Z-uv+\up G)\,]^2}\,,
\end{equation}
where\footnote{This choice of sign of $G$ is in accordance with \cite{PO:01,SS:17,SSLP:16,PSSS:19}, 
  which are our main points of reference, but different e.g.\ from \cite{PSSS:15}.} ${G(Z,\bar Z)= Zh_{,Z}+\bar Zh_{,\bar Z}-h}$, and 
$h$ is a real valued function. Finally, $\up=\up(u)=0$ for $u\leq 0$ and 
$\up(u)=u$ for $u\geq0$ is the \emph{kink   function}. The metric \eqref{conti},
possessing a Lipschitz continuous coefficient, is of local Lipschitz 
regularity, which we denote by $C^{0,1}$. This is still beyond the reach of classical smooth Lorentzian geometry, which roughly reaches down to $C^{1,1}$ at least as far as convexity and causality is concerned \cite{Min:15,KSS:14,KSSV:14,GGKS:18}. However, the Lipschitz property is decisive since it prevents the most dramatic downfalls in causality theory which are known to occur for H\"older continuous metrics \cite{HW:51,CG:12,SS:18,GKSS:20,Min:19}. More specifically, in the context of the initial value problem for the geodesic equation, the Lipschitz property guarantees the existence of $C^{1,1}$-solutions \cite{Ste:14, LLS:21} which, due to the special geometry of the models at hand, are even (globally) unique \cite{PSSS:15}.
\medskip

A very useful way of thinking about the above metrics is the following. Starting with the conformally flat form of the constant curvature 
backgrounds, 
\begin{equation}\label{backgr}
  \d s_0^2= \frac{2\,\d\eta\,\d\bar\eta
    -2\,\d{\mathcal U}\,\d{\mathcal 
      V}}{[\,1+{\frac{1}{6}}\Lambda(\eta\bar\eta-{\mathcal U}{\mathcal
      V})\,]^2}
\end{equation}
we apply the transformation
\begin{equation}\label{ro:trsf}
  {\mathcal U}=u\,,\quad 
  {\mathcal V}=
  \begin{cases}
    v &\mbox{for ${\mathcal U}<0$}\\               
    v+h+uh_{,Z}h_{,\bar Z}  &\mbox{for ${\mathcal U}>0$}       
  \end{cases}
  \,,\quad
  \eta=
  \begin{cases}
    Z &\mbox{for ${\mathcal U}<0$}\\  
    Z+uh_{,\bar Z} &\mbox{for ${\mathcal U}>0$}  
  \end{cases}
\end{equation}
to \eqref{backgr} separately for negative and positive values of ${\mathcal 
  U}$ to formally obtain \eqref{conti}.  The corresponding 
distributional form \eqref{4D_imp} is \emph{formally} derived by writing 
\eqref{ro:trsf} in the form of a  `discontinuous coordinate transform' using the 
Heaviside function $\Theta$, i.e.,
\begin{equation}\label{trans}
   {\mathcal U}=u\,,\quad
   {\mathcal V}=v+\Theta\,h+\up\,h_{,Z}h_{,\bar Z}\,, \quad
  \eta=Z+\up\,h_{,\bar Z}\,.
\end{equation}
Then applying \eqref{trans} to \eqref{conti} and retaining all distributional 
terms one arrives at \eqref{4D_imp}. 

This transformation has first been given in \cite{Pen:72} for plane waves 
and in \cite{AB:97,PV:98} for the general \emph{pp}-wave case, i.e.,
nonexpanding impulsive waves propagating in a Minkowski background, hence 
$\Lambda=0$ in the above metrics \eqref{conti}, \eqref{4D_imp}.
Clearly, a mathematically sound treatment of the transformation \eqref{trans} 
is a delicate matter and it is the topic of this paper to completely clarify the situation.  


\subsection{Results in the pp-wave case}\label{sec:rpp}

A first rigorous result in this realm has been established in \cite{KS:99a} in the special case of impulsive \emph{pp}-waves. There, nonlinear distributional geometry \cite[Chapter 3]{GKOS:01} based on algebras of generalised functions \cite{Col:85} has been employed to show the following: The `discontinuous coordinate change' \eqref{trans} relating the distributional Brinkmann form of the metric, i.e., \eqref{4D_imp} with $\Lambda=0$ to the continuous Rosen form, i.e., \eqref{conti} with $\Lambda=0$ is the distributional limit of a \emph{generalised diffeomorphism}, a concept to be detailed below. Intuitively speaking this approach consists in viewing the impulsive wave as a limiting case of a sandwich wave with an arbitrarily regularised wave profile, where the two forms of the metric arise as its (distributional) limits taken in different coordinate systems. This result rests on two pillars:
\begin{enumerate}
 \item[(A)] The realisation that a special family of null geodesics in the 
distributional form of the metric precisely gives the coordinate lines of the 
coordinate system underlying the continuous form of the metric \cite{Ste:98}. 
In simpler words, the transformation \eqref{trans} is given by a special family of null geodesics.
\item[(B)] A fully nonlinear distributional analysis of the geodesics of the 
distributional metric. This is even a prerequisite to make (mathematical) sense of item (A): 
There is no valid solution concept for the geodesic equations of \eqref{4D_imp} 
with $\Lambda=0$ in classical distribution theory. Hence in \cite{Ste:98,KS:99} 
nonlinear distributional geometry has been employed to show existence, 
uniqueness and completeness\footnote{Observe that---although not stressed in the original works---especially the completeness result is remarkable, since it proves that the analyti\-cally `very singular' distributional spacetime is nonsingular in view of the standard definition \cite{HE:73}.} of geodesics.
\end{enumerate}

Here we set out to apply an analogous strategy to deal with the more 
involved $\Lambda\not=0$-case. In fact, building on earlier results we  provide the keystone of this approach: In \cite{SS:17} a nonlinear distributional analysis of the geodesic equation, see 
item (B) above, has been established. These results 
are in turn based on the formal analysis of the geodesics in \cite{PO:01} and the fixed point techniques put forward in \cite{SSLP:16}. We will collect the relevant statements in Section \ref{sec:nag}, below.

On the other hand, the geometric issue (A) has recently been resolved in \cite{PSSS:19} and we will review the results relevant for the present work in Secion \ref{sec:generators}, below. 
\medskip

In the nonlinear distributional analysis of the geodesics of \cite{SS:17,SSLP:16} it has, however, turned out that a five-dimensional approach is much better suited than a direct approach using the metric \eqref{4D_imp}. Indeed since \cite{PO:01} all works relevant for us have used this five-dimensional formalism and we close this section by briefly recalling it. The basic idea is to describe an impulsive wave in (anti-)de Sitter space as a hyperboloid in a five-dimensional flat space with impulsive wave \cite{PG:99a,PG:99}.


\subsection{The five-dimensional formalism}\label{sec:5D}



One starts out with the five-dimensional impulsive \textit{pp}-wave manifold
\begin{equation}\label{5D_imp}
  \dd s^{2}=-2\dd U \dd V+\dd Z_{2}^{2}+\dd Z_{3}^{2}+ \sigma\dd Z_{4}^{2} 
  +H(Z_{2},Z_{3},Z_{4})\delta(U)\dd U^{2} \,,
\end{equation}
with the constraint 
\begin{align}
  -2UV+Z_{2}^{2}+Z_{3}^{2}+ \sigma Z_{4}^{2} =\sigma a^{2} 
  \label{Constraint_Hyp} \,,
\end{align}
and parameters $\sigma=\pm 1 = \mathrm{sign}\, \Lambda$ and 
$a=\sqrt{\frac{3}{\sigma \Lambda}}$. The metric (\ref{5D_imp}) with 
(\ref{Constraint_Hyp}) thus represents an impulsive wave with the impulse located on the null hypersurface 
${U=0}$, 
\begin{equation}
  Z_{2}^2+Z_{3}^2+\sigma Z_{4}^2=\sigma a^2\,,\label{imp_surface}
\end{equation}
corresponding to a non-expanding 2-sphere for ${\Lambda>0}$ and a hyperboloidal 
2-surface for ${\Lambda<0}$. 
\medskip

Now, to relate (\ref{5D_imp}), (\ref{Constraint_Hyp}) 
to the four-dimensional distributional form (\ref{4D_imp}) we may use the  transformation
\begin{equation}
  {U} = \frac{\U}{\Omega}\,, \qquad   {V} = \frac{\V}{\Omega} \,, \qquad 
  Z_2+iZ_3= \frac{\sqrt2\,\eta}{\Omega}= \frac{x}{\Omega}+i \frac{y}{\Omega} 
\,, 
 \qquad Z_4 = a \left(\frac{2}{\Omega}-1\right), \label{CoordTrans_4D_to_5D}
\end{equation}
where we have used 
\begin{equation}
\Omega=1+{\textstyle\frac{1}{6}}\Lambda(\eta\bar\eta-\U\V)=1+{\textstyle\frac{1}{12}}\Lambda(x^2+y^2-2\U\V) \,,
\end{equation}
and the associated real coordinates $x,y$ with $\eta=1/\sqrt{2}(x+iy)$. Finally,  the profile functions are related by
\begin{equation}\label{HhRelation}
H =\frac{2\h}{1+\frac{1}{6}\Lambda\eta\bar\eta}
=\frac{2\h}{1+\frac{1}{12} \Lambda(x^2+y^2)}\,. 
\end{equation}

\section{Nonlinear distributional analysis of the geodesics}\label{sec:ndaog}

In this section we collect the results from the nonlinear distributional 
analysis of the geodesic equation in nonexpanding impulsive gravitational 
waves which we are going to use in the course of our work, cf.\ item (B) above. To keep this manuscript self contained, we start with a very terse review of the main elements of nonlinear distributional Lorentzian geometry.

\subsection{Nonlinear distributional geometry} \label{sec:ndg}

The theory we are going to summarize (for all details see 
\cite{KS:02a,KS:02b}, \cite[Section 3.2]{GKOS:01}) rests on J.F.\ Colombeau's 
construction of (so-called special) algebras of generalised functions 
\cite{Col:85}. These provide an extension of the linear theory of Schwartz 
distributions to the nonlinear realm retaining maximal consistency with 
classical analysis. The basic idea of the construction is regularisation of distributions via nets of smooth functions combined with asymptotic estimates in terms of a regularisation parameter. 

On a smooth (second countable and Hausdorff) manifold $M$ denote by 
$\E(M)$ the set of all nets of smooth functions $(u_\eps)_{\eps\in (0,1]=:I}$ 
which in addition depend smoothly\footnote{Smooth dependence on the 
parameter renders the theory technically more pleasant but was not assumed 
in earlier references, for details see \cite[Section 1]{BK:12}.} on $\eps$. The 
\emph{algebra of generalised functions on $M$} is defined 
as the quotient $\G(M) := 
\EM(M)/\NN(M)$ of \emph{moderate} modulo \emph{negligible} nets in $\E(M)$, 
which are defined via the following asymptotic estimates
\[
\EM(M) :=\{ (u_\eps)_\eps\in\E(M):\, \forall K\Subset M\
\forall P\in{\mathcal P}\ \exists N:\ \sup\limits_{p\in 
  K}|Pu_\eps(p)|=O(\eps^{-N}) \}\,,\\
\]
\[\NN(M)  :=\{ (u_\eps)_\eps\in\EM(M):\ \forall K\Subset M\
\forall m:\ \sup\limits_{p\in K}|u_\eps(p)|=O(\eps^{m}) \}\,.
\]
Here ${\mathcal P}$ denotes the space of all linear differential operators on $M$ and $K\Subset M$ means that $K$ is a compact subset of $M$. We write $u = [(u_\eps)_\eps]$ for the elements of $\G(M)$ and call $(u_\eps)_\eps$ a 
representative of the generalised function $u$. With sums, products, and the Lie 
derivative defined componentwise (i.e., for fixed $\eps$)  $\G(M)$ becomes a 
\emph{fine sheaf of differential algebras}. 

The space of distributions $\D'(M)$ can be embedded into 
$\G(M)$ via sheaf homomorphisms $\iota$ that preserve the product of 
$\Cinf(M)$-functions. A coarser way of relating generalised functions in 
$\G(M)$ to distributions is as follows: $u\in \G(M)$ is called 
\emph{associated} with $v\in \G(M)$, written $u\approx v$, if $u_\eps - v_\eps \to 0$
in $\D'(M)$. Moreover, $w\in \D'(M)$ is called associated with $u$ if 
$u\approx \iota(w)$.

\smallskip

More generally the space of \emph{generalised sections}  
of a vector bundle $E\to M$ is defined as
$
\label{tensorp} \GaG(M,E) = \G(M) \otimes_{\Cinf(M)} \Gamma(M,E)= 
L_{\Cinf(M)}(\Gamma(M,E^*),\G(M)).
$ 
It is a fine sheaf of finitely generated and projective $\G$-modules. 
For \emph{generalised tensor fields} of rank $r,s$ we use 
the notation 
\begin{equation}
  \G^r_s(M)\cong L_{\G(M)}(\G^0_1(M)^r,\G^1_0(M)^s;\G(M)). 
\end{equation}
Observe that it is possible to  insert generalised vector fields and one-forms 
into generalised tensors, which is not possible in the distributional setting, 
cf.\ \cite{Mar:67,deR:84}. This in turn allows one to work with generalised metrics 
much as in the smooth setting. Here a \emph{generalised pseudo-Riemannian 
metric} is a section $g\in\G^0_2(M)$ that is symmetric with determinant $\det g$ 
invertible in $\G$ (equivalently $|\det (g_\eps)_{ij}| \geq \eps^m$ for some $m$ on 
compact sets), and a well-defined index $\nu$ (the index of $g_\eps$ equals 
$\nu$ for $\eps$ small). By a `globalization Lemma' in \cite[Lemma 2.4, p.\ 
6]{KSSV:14} any generalised metric $g$ possesses a representative 
$(g_\eps)_\eps$ such that each $g_\eps$ is a smooth metric globally on $M$. We call a pair $(M,g)$ consisting of a smooth manifold and a generalised pseudo-Riemannian (Lorentzian) metric a \emph{generalised pseudo-Riemannian (Lorentzian) manifold}, and a \emph{generalised spacetime} if, in addition  to being Lorentzian, it can be time oriented by a smooth vector field. This setting consistently extends the `maximal distributional' one of Geroch and Traschen, 
see \cite{SV:09,S:08}. In particular, any generalised metric induces an isomorphism between generalised vector fields and one-forms, and there is a unique Levi-Civita connection $\nabla$ corresponding to $g$.
\medskip

Next, to speak of geodesics 
one uses the space of generalised curves $\G[J,M]$ \emph{taking values in 
$M$}, defined on an interval $J$. It is again a quotient of moderate modulo 
negligible nets $(\gamma_\ep)_\ep$ of smooth curves, where we call a net 
moderate (negligible) if $(\psi\circ \gamma_\eps)_\eps$ is moderate (negligible) for all smooth $\psi:M\to\R$. In addition, $(\gamma_\eps)_\eps$ is supposed to be \emph{c-bounded}, which means that $\gamma_\eps(K)$ is contained in a compact subset of $M$ for $\eps$ small and all compact sets $K\Subset J$. Observe that no distributional counterpart of such a space exists and it has long been realised that regularisation is a possible remedy, cf.\ \cite{Mar:67}.

The \emph{induced covariant derivative} of a generalised vector field 
$\xi=[(\xi_\eps)_\eps]\in\G^1_0(M)$ on a generalised curve $\gamma=[(\gamma_\eps)_\eps]\in\G[J,M]$ 
is defined componentwise (i.e., by the classical fomulae for fixed $\eps$) and 
gives again a generalised vector field $\xi'$ 
on $\gamma$. In particular, a \emph{geodesic} in a generalised 
spacetime is a curve $\gamma\in\G[J,M]$ satisfying 
$\gamma''=0$. Equivalently the usual local formula holds, i.e.,  
\begin{equation}\label{geo}
  \Big[\,\Big(\frac{d^2\gamma_\eps^k}{d\lambda^2}
  +\sum_{i,j}{\Gamma_\eps}^k_{ij}\frac{\gamma_\eps^i}{d\lambda}\frac{\gamma_\eps^j}{d\lambda}\Big)_\eps\,\Big]
  =0,
\end{equation}
where $\Gamma^k_{ij}=[({\Gamma_\eps}^k_{ij})_\eps]$ denotes the Christoffel symbols of 
the generalised metric $g=[(g_\eps)_\eps]$. 
Finally we say that a generalised spacetime  $(M,g)$ is \emph{geodesically 
  complete} if every geodesic $\gamma$ can be defined on $\R$ \cite[Definition 
2.1, p.\ 240]{SS:15}.

\subsection{Impulsive waves as generalised spacetimes \& the geodesic equation}
In this section we introduce the generalised metric form of nonexpanding impulsive waves for arbitrary values of $\Lambda$, using the five-dimensional formalism of Section \ref{sec:5D}. Indeed, starting with the metric \eqref{5D_imp} we replace the Dirac-delta with a generic regularisation: Choose any smooth function $\rho$ on $\R$ with unit integral and support in $[-1,1]$ and for $\eps\in(0,1]$ set $\delta_\eps(x):=(1/\eps)\,\rho(x/\eps)$. Such a net $(\delta_\eps)_\eps$ is called a \emph{model delta net} and we use it to define the regularised pseudo-Riemannian manifold $(\bar M=\R^5,\bar g_\eps)$ with line element    
\begin{equation}\label{5ipp}
\d\bar s_{\eps}^{2}=\d Z_{2}^{2}+\d Z_{3}^{2}+ \sigma \d Z_{4}^{2}-2\d U \d V
 +H(Z_{2},Z_{3},Z_{4})\delta_{\eps}(U) \d U^{2}\,.
\end{equation}  
Hence $(\bar M,\bar g_\eps)$ is a smooth \emph{sandwich wave} which is 
flat space outside the \emph{wave zone} given by $|U|\leq\eps$. The regularised 
impulsive wave spacetime of our interest $(M,g_\eps)$ is now given by the 
(anti-)de Sitter hyperboloid \eqref{Constraint_Hyp} embedded in $(\bar M,\bar g_\eps)$. 

To obtain an impulsive wave metric in $\G^0_2(\R^5)$ we use a \emph{model delta function}, that is an element $D\in\G(\R)$ that has a model delta net as a representative, $D=[(\delta_\eps)_\eps]$. Next we consider the $5$-dimensional generalised impulsive \emph{pp}-wave manifold 
$(\bar M=\R^5,\bar g)$ with
\begin{equation}\label{eq:Mbar}
  \d\bar s^2= dZ^2_2 + dZ^2_3 + \sigma dZ^2_4 - 2 dU dV + H(Z_2,Z_3,Z_4)D(U) 
dU^2\,.
\end{equation}
One easily checks that this defines a generalised metric with representative \eqref{5ipp}. At this point we specify the (A)dS hyperboloid $M$ in 
$(\bar M,\bar g)$ as usual, explcitly by 
\begin{align}\label{eq:M}\nonumber
  M := & \{(U,V,Z_2,Z_3,Z_4)\in\bar M : F(U,V,Z_2,Z_3,Z_4)=0\}\,,\quad 
\mbox{where}\\
&F(U,V,Z_2,Z_3,Z_4):=-2UV+Z_{2}^{2}+Z_{3}^{2}+ 
\sigma Z_{4}^{2}-\sigma a^{2}\,.
\end{align}
Note that $M$ is a (classical) smooth hypersurface. Finally, we restrict the metric $\bar g$ (again componentwise, that is for fixed $\eps$)  to $M$ to obtain the generalised spacetime $(M,g)$ which we take as our model of nonexpanding impulsive waves propagating in a(n 
anti-)de Sitter universe.
\medskip

To derive the geodesic equations in $(M,g$) we use the fact that in nonlinear generalised functions all classical formulae hold for fixed $\eps$. So we derive the $M$-geodesics from the condition that their $\bar M$-acceleration is normal to $M$, $\bar\nabla_T T=-\sigma g(T,\bar\nabla_T N)N/g(N,N)$. Here $\bar\nabla$ is the generalised Levi-Civita connection of $(\bar M,\bar g)$, and $T$ and $N$ denote the geodesic tangent and the (non-normalised) normal vector to $M$ defined via its representative $N^\alpha_\eps=g^{\alpha\beta}_\eps dF_\beta$, respectively. In this way we arrive at the geodesic equations 
for $\gamma=(U,V,Z_p)$:
\begin{align}\label{eq:geos:G}
  \ddot{U}
  &=-\Big( e + \frac{1}{2}\,\dot{U}^2\,\tilde{G}
  - \dot{U}\,\big(H\,D\,U\dot{\big)}\Big)\
  \frac{U}{\sigma a^2-U^2 H D}\,, \nonumber\\
  \ddot{V}-\frac{1}{2}\,H\,\dot{D}\,\dot{U}^2 - 
  \delta^{pq}H_{,p}\,\dot{Z}_q\,D\,\dot{U}
  &=-\Big(e + \frac{1}{2}\,\dot{U}^2\,\tilde{G}- \dot{U}\,\big( H\, D\, U 
  \dot{\big)}\Big)\
  \frac{V+H\,D\, U}{\sigma a^2-U^2 H D}\,,\nonumber\\
  \ddot{Z}_{i}-\frac{1}{2}H_{,i}\,D\, \dot{U}^2 &=-\Big(e + 
  \frac{1}{2}\,\dot{U}^2\,\tilde{G}
  - \dot{U}\,\big( H\, D\, U \dot{\big)}\Big)\ \frac{Z_{i}}{\sigma 
    a^2-U^2 H D}\,,\\
  \ddot{Z}_{4}-\frac{\sigma}{2}\,H_{,4}\,D\,\dot{U}^2 &=-\Big(e + 
\frac{1}{2}\,\dot{U}^2\,\tilde{G}
          - \dot{U}\,\big( H\, D\, U \dot{\big)}\Big)\
    \frac{Z_{4}}{\sigma a^2-U^2 H D}\,. \nonumber
\end{align}
Here $e=|\dot\gamma|=\pm 1,0$ for which it is natural to be fixed independently of $\eps$ and we have used the usual convention for spatial coordinates, i.e., $Z_p$ for $p=2,3,4$ and $Z_i$ for $i=2,3$. Moreover, we used the abbreviation $\tilde G = \delta^{p,q} Z_p H_{,q} - H$.
%

\subsection{Existence and uniqueness of geodesics}\label{sec:nag}

Next we briefly indicate how one proves unique solvability of the initial value problem for differential equations like \eqref{eq:geos:G} in generalised functions. This is basically done in three steps:
\begin{enumerate}
 \item[(1)] One proves existence of a so-called \emph{solution candidate}, in our case a net of smooth functions $\gamma_\eps=(U_\eps,V_\eps,Z_{p\eps}):J\to M$ depending smoothly on the parameter $\eps$ and solving the corresponding equation componentwise, i.e., for fixed (small) $\eps$. In our case this means $\gamma_\eps$ solves
\begin{align}\label{eq:geos}
\ddot{U}_\eps
&=-\Big( e + \frac{1}{2}\,\dot{U}_\eps^2\,\tilde{G_\eps}
           - \dot{U}_\eps\,\big(H
           \,\delta_{\eps}
           \,U_\eps\dot{\big)}\Big)\
           \frac{U_\eps}{\sigma a^2-U_\eps^2H\delta_\eps}\,, \nonumber\\
\ddot{V}_\eps-\frac{1}{2}\,H
           \,\delta^{'}_{\eps}
           \,\dot{U}_\eps^2 - \delta^{pq}H_{,p}
           \,\delta_{\eps}
           \,\dot{Z}_{q\eps}\,\dot{U}_\eps
&=-\Big(e + \frac{1}{2}\,\dot{U}_\eps^2\,\tilde{G}_\eps
          - \dot{U}_\eps\,
            \big( H\, \delta_\eps\, U_\eps \dot{\big)}\Big)\
   \frac{V_\eps+H\,\delta_{\eps}U_\eps}
        {\sigma a^2-U_\eps^2H\delta_\eps}\,,\nonumber\\
\ddot{Z}_{i\eps}-\frac{1}{2}H_{,i}\,\delta_{\eps}\dot{U}_\eps^2
&=-\Big(e + \frac{1}{2}\,\dot{U}_\eps^2\,\tilde{G}_\eps
          - \dot{U}_\eps\,
            \big( H\, \delta_\eps\, U_\eps \dot{\big)}\Big)\
   \frac{Z_{i\eps}}{\sigma a^2-U_\eps^2H\delta_\eps}\,,\\
\ddot{Z}_{4\eps}-\frac{\sigma}{2}\,H_{,4}\,\delta_{\eps}\dot{U}_\eps^2
&=-\Big(e + \frac{1}{2}\,\dot{U}_\eps^2\,\tilde{G}_\eps
          - \dot{U}_\eps\,
            \big( H\, \delta_\eps\, U_\eps \dot{\big)}\Big)\
    \frac{Z_{4\eps}}{\sigma a^2-U_\eps^2H\delta_\eps}\,, \nonumber
\end{align}
where we (again) have suppressed the parameter $\lambda$ as well as the dependencies on the
variables. However, note that always
\begin{align}\label{eq:ucomp}
 \delta_\eps&=\delta_\eps(U_\eps(\lambda))\,,\quad \delta'_\eps=\delta'_\eps(U_\eps(\lambda))\,,\nonumber \\
 \tilde G_\eps&=\tilde G_\eps\big(U_\eps(\lambda),Z_{p\eps}(\lambda)\big)\,,  \quad
  H=H\big(Z_{p\eps}(\lambda)\big)\,,\quad \text{and}\quad H_{,p}=H_{,p}\big(Z_{q\eps}(\lambda)\big)\,.
\end{align}
Observe that a solution candidate $(\gamma_\eps)_\eps$ actually is comprised of geodesics $\gamma_\eps$ of the regularised spacetime $(M,g_\eps)$, cf.\ \eqref{5ipp}.


\item[(2)] One shows \emph{existence} of a generalised solution by establishing c-boundedness and moderateness of the 
solution candidate, i.e., $\gamma:=[(\gamma_\eps)_\eps]\in\G[J,M]$.
\item[(3)] To show \emph{uniqueness} in $\G$ one solves a negligibly perturbed version of the equations---in our case \eqref{eq:geos}, with negligible nets added at the right hand side of every equation---and shows that the corresponding net of 
solution $(\tilde{\gamma}_\eps)_\eps$ only differs negligibly from $(\gamma_\eps)_\eps$, i.e., $[(\tilde{\gamma}_\eps)_\eps] =[(\gamma_\eps)_\eps]$. Observe that this amounts to an additional stability statement for the solutions of the regularised equation.
\end{enumerate}

With this let us turn to initial conditions for solutions of the system \eqref{eq:geos:G} appropriate for our purpose, see also Figure \ref{fig:seed1}. Consider a geodesic $\gamma^-=(U^-,V^-,Z^-_{p})$ of the background (an\-ti-)\-de Sitter universe \emph{without} impulsive wave but reaching $U=0$ and assume that we have chosen an affine parameter such that $U^-(0)=0$ and $\dot U^-(0)=1$. Further, since we will only be interested in null geodesics, {we have} that $\dot \gamma^-$ is {null, i.e.,} $e=0$. 
Now we conveniently prescribe initial data at the affine parameter value $\lambda=0$, 
\begin{equation}\label{eq:data0}
 \gamma^-(0)=(0,V^0,Z^0_{p})\,,\qquad \dot \gamma^-(0)=(1,\dot V^0,\dot Z^0_{p})\,,
\end{equation}
where the constants satisfy the constraints
\begin{equation}\label{const1}
(Z^0_{2 })^{2}+(Z^0_{3 })^{2}+ \sigma (Z^0_{4 })^{2}=\sigma a^{2},\quad
Z^0_{2 } \dot Z^0_{2 }+Z^0_{3 } \dot Z^0_{3 }+ \sigma Z_{4 }^0\dot  Z_{4 }^0-V^0=0\,,
\end{equation}
and the normalization
\begin{equation}\label{eq:norm}
 -2 \dot V^0 +(\dot Z^0_{2 })^2+(\dot Z^0_{3 })^2+\sigma(\dot Z^0_{4 })^2=e=0\,.
\end{equation}
We will refer to $\gamma^-$ as \emph{seed geodesics} and start to think of it as geodesics in the impulsive wave spacetime (\ref{5D_imp}), (\ref{Constraint_Hyp}) \emph{`in front' of the impulse}, that is for $U^-<0$. Also, $\gamma^-$ is a geodesic in the regularised spacetime \eqref{5ipp}, \eqref{Constraint_Hyp} \emph{`in front' of the sandwich wave}, that is for $U^-\leq -\eps$. We will denote the affine parameter time when $\gamma^-$ enters this regularisation wave region by $\alpha_\eps$, i.e.,
\begin{equation}
 U^-(\alpha_\eps)=-\eps\,.
\end{equation}
Observe that $\alpha_\eps\to 0$ from below as $\eps\to 0$. 
Finally, we come to setting up the data for the solution candidate $\gamma_\eps$ of 
the system \eqref{eq:geos:G} by
\begin{equation}\label{eq:real-data}
 \gamma_\eps(\alpha_\eps)=\gamma^-(\alpha_\eps),\quad
 \dot \gamma_\eps(\alpha_\eps)=\dot\gamma^-(\alpha_\eps),
\end{equation}
i.e., as the data the seed geodesic assumes at $\alpha_\eps$. We 
will frequently refer to these data \eqref{eq:real-data} as initial data 
constructed from the seed geodesic $\gamma^-$ with data \eqref{eq:data0}.
\begin{figure}[h]
\begin{center}
\def\svgwidth{0.65\textwidth}
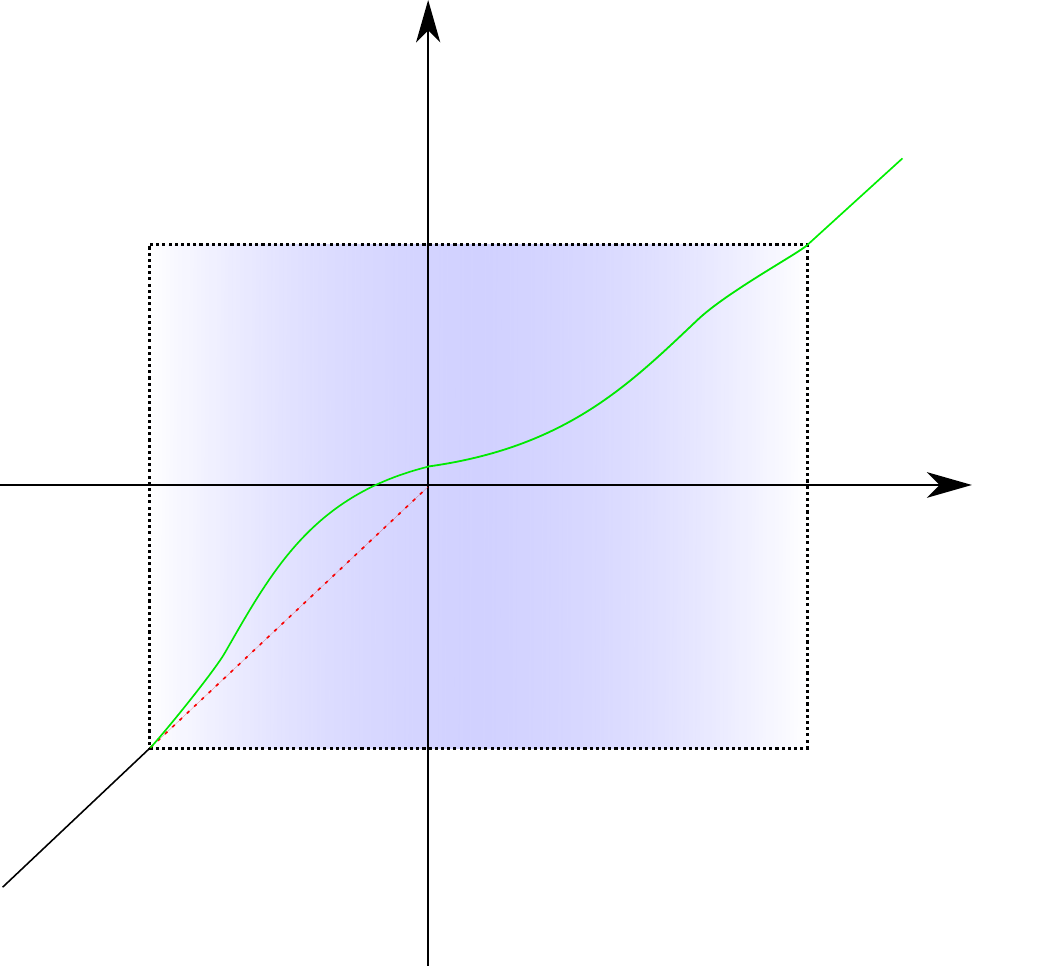
\caption{The $U$-component of the seed geodesic $\gamma^-$
is depicted in black until it reaches the regularisation sandwich at parameter
time $\lambda=\alpha_\eps$, i.e., $U^-(\alpha_\eps)=-\eps$. While in the background spacetime it would continue as the dotted red line to $U=0$ at $\lambda=0$, in the regularised spacetime it continues as a solution $\gamma_\eps$ of \eqref{eq:geos} with data \eqref{eq:real-data} (depicted in green). 
Theorem \ref{thm:ex+un} guarantees that $\gamma_\eps$ (for $\eps$ small) leaves the regularisation sandwich at $\lambda=\beta_\eps$ and continues as a background geodesic.}
\label{fig:seed1}
\end{center}
\end{figure}

The central result on the solvability of the geodesic equations for the generalised spacetime \eqref{eq:Mbar}, \eqref{eq:M} now is the following, cf.\ \cite[Thms.\ 3.6, 3.7]{SS:17}. 
\begin{Theorem}[Global existence and uniqueness]\label{thm:ex+un}
The geodesic equation \eqref{eq:geos:G} with initial data \eqref{eq:real-data} (constructed from the seed geodesic $\gamma^-$ with data \eqref{eq:data0}) possesses a global unique solution $\gamma=(U,V,Z_p)=[(\gamma_\eps)_\eps]\in \G[\R,M]$.    
\end{Theorem}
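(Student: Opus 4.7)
The plan is to execute the three-step programme recalled just before the theorem: construct a smooth solution candidate $(\gamma_\eps)_\eps$, verify c-boundedness and moderateness so that it defines an element of $\G[\R,M]$, and establish uniqueness via stability against negligible perturbations. A crucial preliminary observation is that $\gamma^-$ is a background (anti-)de~Sitter geodesic, hence globally defined, so on $\lambda\le\alpha_\eps$ we simply set $\gamma_\eps:=\gamma^-$ and inherit smooth $\eps$-dependence trivially.

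The heart of step~(1)---and what I expect to be the main obstacle---is handling the wave zone $\lambda\in[\alpha_\eps,\beta_\eps]$, where the stiffness of $\delta_\eps=O(1/\eps)$ and $\delta'_\eps=O(1/\eps^2)$ has to be tamed. For fixed small $\eps$ the regularised metric is smooth, so classical ODE theory gives a local solution; the task is to show that it traverses the wave zone in bounded affine time and to obtain $\eps$-uniform bounds. The natural device is to change the independent variable from $\lambda$ to $u:=U_\eps(\lambda)$, equivalently to $\tau:=u/\eps\in[-1,1]$. Since $\dot U_\eps(\alpha_\eps)=1$ and the factor $U_\eps^2 H \delta_\eps$ is $O(\eps)$, the denominator $\sigma a^2-U_\eps^2 H\delta_\eps$ stays bounded away from zero, and the first equation of \eqref{eq:geos} yields $\dot U_\eps$ close to $1$ throughout the wave zone; hence the change of variable is valid and $\beta_\eps$ exists with $\beta_\eps\to 0$ from above. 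In the rescaled variable the whole system becomes a non-stiff ODE with $\eps$-independent right-hand side on $\tau\in[-1,1]$, so $(Z_p,V)$ and their $\tau$-derivatives satisfy $\eps$-independent bounds. Beyond $\beta_\eps$ the solution continues as the unique background geodesic with matched data, again globally defined; smooth dependence on $\eps$ follows from smooth dependence on parameters in classical ODE theory.

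Step~(2) is then a bookkeeping argument on the three regions $(-\infty,\alpha_\eps]$, $[\alpha_\eps,\beta_\eps]$, $[\beta_\eps,\infty)$. The c-boundedness of $(\gamma_\eps)_\eps$ is immediate from the uniform wave-zone bounds combined with the smooth background behaviour outside. Moderateness of $\gamma_\eps$ and of its $\lambda$-derivatives is inherited from the smooth background geodesics outside the wave zone and, inside, from the $\eps$-independent bounds on the rescaled system together with the fact that the factor $d\lambda/d\tau=\eps/\dot U_\eps=O(\eps)$ absorbs derivatives rather than amplifying them. Derivatives with respect to $\eps$ are controlled by differentiating the rescaled ODE in $\eps$, noting that $\alpha_\eps,\beta_\eps$ depend smoothly on $\eps$ with polynomially bounded derivatives, and invoking Gronwall.

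For step~(3), let $(\tilde\gamma_\eps)_\eps$ solve \eqref{eq:geos} with negligible nets added to the right-hand sides and the same initial data at $\alpha_\eps$. Outside the wave zone, standard ODE stability makes the difference negligible. Inside, one again rescales to $\tau\in[-1,1]$ and writes the equation for $\tilde\gamma_\eps-\gamma_\eps$ as a linear inhomogeneous system whose coefficients are bounded $\eps$-uniformly in the rescaled scale and whose inhomogeneity is negligible; Gronwall's inequality then forces the difference and all its derivatives to be negligible as well. Piecing the three regions together yields a unique global $\gamma=[(\gamma_\eps)_\eps]\in\G[\R,M]$, as claimed.
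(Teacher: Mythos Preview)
Your three-step scheme is sound and does yield the result, but it follows a genuinely different route from the one in the cited references \cite{SS:17,SSLP:16} (the present paper only quotes the theorem). There step~(1) is carried out by a Banach fixed-point argument directly in the affine parameter $\lambda$: one first observes that the $V$-equation decouples from the $(U,Z_p)$-system (since $V$ does not enter the right-hand sides of the other equations), sets up a contraction on a ball in $C^1([\alpha_\eps,\alpha_\eps+\eta])$ for $(U,Z_p)$ with explicitly tracked constants, and afterwards recovers $V$ by a single integration. Those contraction constants supply the uniform wave-zone bounds for c-boundedness and are later reused verbatim in Section~\ref{sec:art} to control the dependence on the data $(V,Z_p)$. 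Your rescaling $\tau=U_\eps/\eps$ is the singular-perturbation alternative: more conceptual, and it sidesteps the contraction bookkeeping, at the price of less direct access to the sharp constants the paper subsequently exploits. Both arguments rest on the same mechanism, namely that $U_\eps^2 H\delta_\eps=O(\eps)$ keeps $N_\eps$ bounded away from zero and forces $\ddot U_\eps=O(1)$ in the strip.

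Two points in your write-up need tightening. First, ``$\eps$-independent right-hand side'' overstates the situation: after rescaling the system is merely \emph{uniformly bounded} in $\eps$, which is all you need, but the coefficients still carry $\eps$ through $Z_{p\eps}$, $\dot U_\eps$ and the like. Second, your claim that $d\lambda/d\tau=O(\eps)$ ``absorbs derivatives rather than amplifying them'' points the wrong way: passing from $\tau$- to $\lambda$-derivatives multiplies by $d\tau/d\lambda=\dot U_\eps/\eps=O(1/\eps)$, so each $\lambda$-derivative \emph{loses} a power of $\eps$. This is harmless for moderateness---polynomial growth in $1/\eps$ is permitted---but the sentence as written is misleading. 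Finally, since the change of variable $u=U_\eps(\lambda)$ uses one of the unknowns, the assertion that $\dot U_\eps$ stays near $1$ is a bootstrap that must be closed by a continuation argument, not assumed at the outset.
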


\subsection{Associated geodesics}\label{sec:ass}


Here we recall the associated geodesics of the solutions $[(\gamma_\eps)_\eps]$ of Theorem \ref{thm:ex+un} which were given in \cite[Sec.\ V]{SS:17} based on the explicit calculations in \cite[Sec.\ 5, Appendix B]{SSLP:16}, see also \cite[Eqs.\ (38), (39)]{PO:01} for a formal approach. These are calculated as the (distributional) limits of the representatives $\gamma_\eps=(U_\eps,V_\eps,Z_{p\eps})$ of the solutions of Theorem \ref{thm:ex+un}.


Now to formulate a precise result we first establish a notation for the limiting geodesics. Clearly in front of the impulse, that is for $\lambda<0$ corresponding to $U_\eps<0$, $\gamma_\eps$ will converge to the seed geodesic $\gamma^-$. Similarly, behind the impulse, that is for $\lambda>0$ corresponding to $U_\eps>0$, $\gamma_\eps$ will also converge to a geodesic $\gamma^+$ of the background (A)dS space. Here $\gamma^+=(U^+,V^+,Z_p^+)$ is specified by the values of $\gamma_\eps$ and $\dot\gamma_\eps$ upon leaving the regularisation zone: Indeed it is shown in the course of the proof of Theorem \ref{thm:ex+un} (cf.\ \cite[eq.\ (34) and below]{SS:17}) that there is a parameter value $\beta_\eps>0$ such that $U_\eps(\beta_\eps)=\eps$ and that $\beta_\eps\searrow 0$ for $\eps\to 0$, c.f.\ \cite[Lem.\ A2]{SSLP:16}. Moreover, the corresponding values of $\gamma_\eps(\beta_\eps)$ and $\dot\gamma_\eps(\beta_\eps)$ converge (cf.\ \cite[Prop.\ 5.3]{SSLP:16}). More precisley, we have
\begin{align}\nonumber
    \gamma_\eps(\beta_\eps)
    =\Big(U_\eps(\beta_\eps),V_\eps(\beta_\eps),Z_{p\eps}(\beta_\eps)\Big)&\to\ (0,B+V^0,Z^0_p), \\
    \dot\gamma_\eps(\beta_\eps)
    =\Big(\dot U_\eps(\beta_\eps),\dot V_\eps(\beta_\eps),\dot Z_{p\eps}(\beta_\eps)\Big)
    &\to\ (1,C+\dot V^0,A_p+\dot Z^0_p),
\end{align}
where $V^0$, $Z^0_{p}$, $\dot V^0$, $\dot Z^0_{p}$ are the corresponding seed data \eqref{eq:data0},  and
\begin{align}\label{eq:ro:consistency}\nonumber
 {A_i} 
 &=\frac{1}{2}\, 
         \Bigl(\ H_{,i}(Z_r^0) + \frac{Z_i^0}{\sigma a^2}\big(H(Z_r^0)
          - \delta^{pq}Z_p^0H_{,q}(Z_r^0) \big)\Bigr)\,,\\
 {A_4} 
 &=\ \frac{1}{2}\, 
         \Bigl(\sigma H_{,4}(Z_r^0) + \frac{Z_4^0}{\sigma a^2}\big(H(Z_r^0) -
 \delta^{pq}Z_p^0H_{,q}(Z_r^0) \big)\Bigr)\,, \nonumber\\
 {B}
 &=\  \frac{1}{2}H(Z^0_p) \,,\\
\nonumber
 {C}
&= \frac{1}{8} \bigg(H_{,2}(Z_r^0)^2 + H_{,3}(Z_r^0)^2 +\sigma H_{,4}(Z_r^0)^2 
+\frac{1}{\sigma a^2} H(Z_r^0)^2 - \frac{1}{\sigma a^2}\left(\delta^{pq}Z_p^0
H_{,q}(Z_r^0)\right)^2\bigg)\\\nonumber
&\qquad\quad  -\frac{1}{2 \sigma
a^2}\left(\delta^{pq}Z_{p}^0H_{,q}(Z_r^0) -
H(Z_r^0)\right)V^0 + \frac{1}{2}\delta^{pq}H_{,p}(Z_r^0)\dot{Z}_q^0\,. 
\end{align}
%
Now we explicitly fix the limiting geodesic behind the wave $\gamma^+$ by prescribing the data
\begin{equation}\label{eq:Xdata_ro}
\gamma^+(0)=(0,{B}+V^0,Z_p^0)\,,\quad \text{and}\quad
\dot \gamma^+(0)=(1,{C}+\dot V^0,{A_p}+\dot Z^0_p)\,,
\end{equation}
and denote the corresponding global limiting geodesic by
\begin{equation}\label{eq:limgeos}
 \tilde\gamma(\lambda)=(\tilde U,\tilde V,\tilde Z_p)(\lambda):=\begin{cases}
  \gamma^-(\lambda)\,,\qquad&\lambda \leq 0\\
  \gamma^+(\lambda)\,,\qquad&\lambda>0\,.\end{cases}
\end{equation}
Then we have by \cite[Thm.\ 5.2]{SS:17} and \cite[Thm.\ 5.1]{SSLP:16}, respectively, the following convergence result.
\begin{Theorem}[Associated geodesics]\label{thm:ag}
The solution $\gamma=(U,V,Z_p)=[(\gamma_\eps)_\eps]$ of Theorem \ref{thm:ex+un} is associated with the limiting geodesic $\tilde \gamma$ of \eqref{eq:limgeos}. Moreover we have $U\approx_1\tilde U$ and $Z_p\approx_0\tilde Z_p$. 
\end{Theorem}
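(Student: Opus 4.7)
The plan is to split the affine parameter axis into three regions using $\alpha_\eps$ and $\beta_\eps$ from Theorem \ref{thm:ex+un} and treat the convergence separately on each. On $(-\infty,\alpha_\eps]$ the representative $\gamma_\eps$ coincides identically with the seed geodesic $\gamma^-$, since the regularised wave profile $H\delta_\eps$ is supported in $|U|\leq\eps$. On $[\beta_\eps,+\infty)$ the curve $\gamma_\eps$ is a smooth background (A)dS-geodesic issuing from $(\gamma_\eps(\beta_\eps),\dot\gamma_\eps(\beta_\eps))$. The middle wave zone $[\alpha_\eps,\beta_\eps]$ shrinks to the point $\lambda=0$ and $\gamma_\eps$ is c-bounded on it.

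On the outer regions I would invoke continuous dependence on initial data for the smooth (A)dS geodesic ODE, together with the endpoint limits
\begin{equation*}
\bigl(\gamma_\eps(\beta_\eps),\dot\gamma_\eps(\beta_\eps)\bigr)\longrightarrow \bigl((0,B+V^0,Z_p^0),\,(1,C+\dot V^0,A_p+\dot Z_p^0)\bigr)
\end{equation*}
recalled in Section \ref{sec:ass} from \cite[Prop.\ 5.3]{SSLP:16}, to deduce $\gamma_\eps\to\gamma^+$ uniformly with all derivatives on compact subsets of $(0,\infty)$. The basic association $\gamma\approx\tilde\gamma$ then follows by testing against $\phi\in\Cinfc(\R)$ and decomposing
\begin{equation*}
\int_\R \gamma_\eps\,\phi\,\dd\lambda = \int_{-\infty}^{\alpha_\eps}\!\gamma^-\phi\,\dd\lambda + \int_{\alpha_\eps}^{\beta_\eps}\!\gamma_\eps\phi\,\dd\lambda + \int_{\beta_\eps}^{+\infty}\!\gamma_\eps\phi\,\dd\lambda,
\end{equation*}
where the middle piece is bounded by $(\beta_\eps-\alpha_\eps)\sup|\gamma_\eps\phi|\to 0$ via c-boundedness, and the outer pieces converge to the corresponding integrals of $\tilde\gamma$ by dominated convergence.

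For $Z_p\approx_0\tilde Z_p$ I observe that $\tilde Z_p$ is continuous at $\lambda=0$ (only $\dot{\tilde Z}_p$ jumps, by $A_p$) and that the $\ddot Z_{p\eps}$-equation in \eqref{eq:geos} yields a uniform $L^\infty$-bound on $\dot Z_{p\eps}$ over $[\alpha_\eps,\beta_\eps]$ via the estimates already behind Theorem \ref{thm:ex+un}. Hence $Z_{p\eps}$ is equi-Lipschitz on compacta, and Arzel\`a--Ascoli upgrades distributional to uniform convergence. For $U\approx_1\tilde U$ the decisive feature is the explicit factor $U_\eps$ in the right-hand side of the $\ddot U_\eps$-equation in \eqref{eq:geos}: with $|U_\eps|\leq\eps$ on the wave zone while the bracketed expression remains $O(\eps^{-1})$ there, the full right-hand side is $O(1)$ pointwise, so its integral over the $O(\eps)$-long wave zone is $O(\eps)$. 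This gives $\dot U_\eps(\beta_\eps)-\dot U_\eps(\alpha_\eps)=O(\eps)$, and hence $\dot U_\eps\to 1$ uniformly on $[\alpha_\eps,\beta_\eps]$. Combined with $\dot U_\eps\equiv 1$ before the wave and $\dot U_\eps\to\dot U^+$ after, one obtains $\dot U_\eps\to\dot{\tilde U}$ uniformly on compacta, i.e.\ $U\approx_1\tilde U$.

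The main obstacle is the sharp wave-zone control that drives both refinements: although the singular terms $\dot U_\eps\,\big(H\,\delta_\eps\,U_\eps\dot{\big)}$, $H\,\delta_\eps'\,\dot U_\eps^2$, and $\delta^{pq}H_{,p}\delta_\eps\dot Z_{q\eps}\dot U_\eps$ in \eqref{eq:geos} are of size $\eps^{-1}$ or $\eps^{-2}$ pointwise, their integrals over $[\alpha_\eps,\beta_\eps]$ must be bounded uniformly in $\eps$ in order to establish the $L^\infty$-bound on $\dot Z_{p\eps}$ and the $O(\eps)$-change of $\dot U_\eps$ used above. This is precisely the content of the fixed-point estimates of \cite{SSLP:16,SS:17}, and is what pushes the refined association statements beyond the basic one.
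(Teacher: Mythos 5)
Your argument is correct and reconstructs precisely the strategy of the proof the paper delegates to \cite[Thm.\ 5.2]{SS:17} and \cite[Thm.\ 5.1]{SSLP:16}: splitting at $\alpha_\eps,\beta_\eps$, convergence of the exit data plus continuous dependence for the outer background pieces, and the wave-zone balance ($U_\eps=O(\eps)$ against the $O(\eps^{-1})$ bracket, and the uniform $L^\infty$-bound on $\dot Z_{p\eps}$ from the fixed-point estimates) for the refined $\approx_1$ and $\approx_0$ statements. No gaps beyond the inputs you correctly attribute to the existence theory behind Theorem \ref{thm:ex+un}.
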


Recall that e.g.\ $V=[(V_\eps)_\eps]\approx \tilde V$ means that 
$\lim_{\eps\to0} \int_\R V_\eps(x)\phi(x)\,\dd x = \lara{\tilde V,\phi}$ for all test functions $\phi\in\D(\R)$ (and $\langle \,.\,,\,.\,\rangle$ denotes the distributional action). Similarly, $\tilde U=[(U_\eps)_\eps]\approx_k \tilde U$ $(k\in\N)$ means that
$U_\eps\to \tilde U$ in $C^k(\R)$, i.e., uniformly on all compact subsets of $\R$ up to derivatives of order $k$. Note that the convergences given by Theorem \ref{thm:ag} are optimal in light of $\tilde V$ and $\dot{\tilde{Z}}_p$ being discontinuous across $\lambda=0$, i.e., the limiting geodesics being refracted geodesics of the background suffering a jump in the $V$-position and $V$-velocity as well as in the $Z_p$-velocity, cf.\ \cite[Section 5]{SSLP:16}.  

Note that the limiting geodesics $\tilde\gamma$ of \eqref{eq:limgeos} can be interpreted as the geodesics of the distributional spacetime \eqref{5D_imp}, \eqref{Constraint_Hyp}. Keep in mind, however, that \eqref{eq:limgeos} \emph{does not} solve the (formal) geodesic equations of the distributional spacetime (see \cite[Eq.\ (2.6)]{SSLP:16}, \cite[Eq.\ (28)]{PO:01}) by the lack of a consistent solution concept. Indeed the low regularity of $\tilde\gamma$ does not allow one to insert it into these equations. 

Finally, we will mainly be interested in the null case where we can give a more transparent form of $\tilde \gamma$, cf.\ \cite[Eq.\ (29)]{PSSS:19}
due to the fact that the null geodesics of the (A)dS background are just straight lines, cf.\ \cite[Sec.\ 4]{PO:01}
\begin{equation}\label{5DNullGeod}
\tilde \gamma(\lambda) = \left(
\begin{array}{c}
\lambda\\
V^0+\dot{V}^0\lambda+\Theta(\lambda)B+C\lp\\
Z_p^0+\dot{Z}^0_p\lambda+A_p\lp
\end{array}\right)\,.
\end{equation}

\section{The null geodesic generators and the transformation}\label{sec:4}

In this section we turn to issue (A) of Section \ref{sec:rpp} which has been resolved in the $\Lambda\not=0$-case in \cite{PSSS:19}. Briefly, the main result is that the null geodesic generators of the (A)dS hyperboliod give rise to the notorious transformation \eqref{trans}. We will combine this insight with the results of Section \ref{sec:ndaog} to derive a geometric regularisation of the transformation.

\subsection{The null geodesic generators \& the `discontinuous transformation'}\label{sec:generators}

To begin with we relate the limiting null geodesics of \eqref{5DNullGeod} to the null geodesic generators of the (A)dS hyperboloid. The latter are most conveniently found using the conformally flat coordinates of the (A)dS background \eqref{backgr} to be (cf.\ \cite[Eq.\ (18), (19)]{PSSS:19}\footnote{Where we have already set $c=1$, see item (3) in \cite[Subsec.\ III.B]{PSSS:19}.})
\begin{equation}
\gamma^g_{4D}({\lambda})\ =\ \left(
\begin{array}{c}
     \U^g(\lambda)  \\ \V^g(\lambda) \\ x^g(\lambda) \\ y^g(\lambda) 
\end{array}
\right)\ =\ 
\left(
\begin{array}{c}
\alpha^2\lambda/(1-\beta\lambda)\\
\V_0\\
x_0\\
y_0\\
\end{array}\right)\,, \label{4DGeodAfBetaNeq0}
\end{equation}
where 
\begin{equation}
\alpha= 1+\frac{\Lambda}{12}\left(x_0^2+y_0^2\right) \,, \qquad
\beta= -\frac{\Lambda}{6}\,\V_0 \,. \label{AlphaBetaDef}
\end{equation}
This family of null geodesics is parameterised by three real constants  fixing the positions at the parameter value $\U=\lambda=0$, i.e.,
$\gamma^g_{4D}(0)=(0,\V_0,x_0,y_0$).

Next we write the null generators \eqref{4DGeodAfBetaNeq0} in the five-dimensional representation of Section \ref{sec:5D} but still parameterised by the 4D-data $(\V_0, x_0,y_0)$ (cf.\ \cite[Eq.\ (26)]{PSSS:19})
\begin{equation}\label{5DGeodFin}
\gamma^g_{5D}(\lambda)\ =\ 
\left(\begin{array}{c}
    U^g(\lambda)\\ V^g(\lambda) \\ Z^g_2(\lambda)\\ Z^g_3(\lambda)\\ Z^g_4(\lambda)
\end{array}
\right)\ =\ 
\frac{1-\beta\lambda}{\alpha}\ \left(
\begin{array}{c}
  \alpha\lambda/(1-\beta\lambda)\\
\V_0\\
x_0\\
y_0\\
a\Big(2-\alpha/(1-\beta\lambda)\Big)
\end{array}\right)\,.
\end{equation}
Observe from the first line that $U^g(\lambda)=\lambda$. Now we use the geodesics
\eqref{5DGeodFin} for $\lambda\leq 0$ as seed for the global limiting null geodesics \eqref{5DNullGeod}, that is, according to \eqref{CoordTrans_4D_to_5D}, we set the eight constants $V^0$, $Z^0_p$, $\dot V^0$, and $\dot Z^0_p$ of \eqref{eq:data0} to 
\begin{align}
&V^0=\frac{\V_0}{\alpha} \,,& &Z_2^0=\frac{x_0}{\alpha} \,,&
&Z_3^0=\frac{y_0}{\alpha} \,,& &Z_4^0=a\left(\frac{2}{\alpha}-1\right) \,,
\nonumber \\
&\dot{V}^0=-\frac{\beta}{\alpha}\V_0 \,,&
&\dot{Z}^0_2=-\frac{\beta}{\alpha}x_0 \,,& &\dot{Z}^0_3=-\frac{\beta}{\alpha}y_0
\,,& &\dot{Z}^0_4=-2a\frac{\beta}{\alpha} \,, \label{GenVel}
\end{align}
which relates them to the three parameters $\V_0,x_0,y_0$.
Now we obtain the global limiting geodesic \eqref{5DNullGeod} with seed given by the null geodesic generator of the (A)dS hyperboloid  with data $(\V_0,x_0,y_0)$ as
\begin{equation}
    \gamma_{5D}[\V_0,x_0,y_0](\lambda)=\left(
\begin{array}{c}
\lambda\\
V^0+\dot{V}^0\lambda+\Theta(\lambda)B+C\lp\\
Z_p^0+\dot{Z}^0_p\lambda+A_p\lp
\end{array}\right)\,,
\end{equation}
where we have to substitute \eqref{GenVel} into \eqref{eq:ro:consistency}.
Finally, we express these geodesics in the 4D coordinates $(\U,\V,x,y)$ of \eqref{4D_imp} (cf.\ \cite[Eq.\ (40)]{PSSS:19}) as
\begin{align}
\!\!\!\!\gamma_{4D}[\V_0,x_0,y_0](\U)=\left(
\begin{array}{c}
\U \vspace{2.0mm} \\
\V_0+\Theta(\U)\,\h^\im+\Up
\,\frac{1}{2}\big((\h_{,x}^\im)^2+(\h_{,y}^\im)^2\big) \vspace{2.0mm}  \\
x_0^j+\Up\h_{,j}^\im\\
\end{array}\right)\!, \label{Global4DNullGeod_uParam}
\end{align}
where the profile function and its derivatives are explicitly related by, see (\ref{HhRelation})\footnote{Here we use the relations ${\Lambda=3\sigma/a^2}$ and ${(x_0)^2+(y_0)^2=4\sigma a^2(\alpha-1)}$.}
\begin{align}
H^\im_{,j}=2\h_{,j}^\im-\h^\im\,\frac{x_0^{j}}{\sigma\alpha a^2} \,, \qquad
H^\im_{,4}=-\frac{1}{a}\left(x_0\h^\im_{,x}+y_0\h^\im_{,y}\right)+2\h^\im\,\frac
{\alpha-1}{\alpha a}\,. \label{DerivativesHh}
\end{align}
Here $H^\im$ and ${\h}^\im$ as well as the corresponding derivatives denote the respective values at the instant of interaction of the geodesics with the impulse, i.e., at the parameter value $\U=\lambda=0$. So we e.g.\ have  $H_{,j}^\im=H_{,j}(Z^0_p)$. 
Also the constants $A_p$, $B$, and $C$ can explicitly be written in terms of $\h$, cf.\ \cite[Eqs.\ (33)--(35)]{PSSS:19}
\begin{align}
& A_j=\h^\im_{,j}+\frac{x_0^j}{2\sigma\alpha a^2}\, \G, \quad 
A_4=\frac{1}{\sigma\alpha a} \G, \quad 
B = \frac{1}{\alpha}\h^\im \,,\nonumber \\&
C
=\frac{1}{2}\big((\h^\im_{,x})^2+(\h^\im_{,y})^2\big)+\frac{1}{
2\sigma\alpha a^2}\big((\h^\im+\G)\V_0+\h^\im\G\big) \,, \label{5DCoeffSpec}
\end{align}
where $\G$ and the conformal factor 
take the form
\begin{equation}
\G \equiv \h^\im-x_0\h^\im_{,x}-y_0\h^\im_{,y} \quad\text{and}\quad 
\Omega=
\frac{\alpha}{1-\beta\lambda+\frac{\Lambda}{6}\G\,\lp}\,.
\end{equation}


The key observation at this point is that the limiting geodesics \eqref{Global4DNullGeod_uParam} exactly match the transformation \eqref{trans}. More precisely (cf.\ \cite[Sec.\ IV]{PSSS:19}), we may employ \eqref{Global4DNullGeod_uParam} to transform the coordinates $(u,v,Z)\equiv(u,v,X,Y)$ in which the metric is \emph{continuous} (cf.\ \eqref{conti}) to the coordinates $(\U,\V,\eta)\equiv(\U,\V,x,y)$ in which the metric is \emph{distributional} (cf.\ \eqref{4D_imp}) via
\begin{equation}\label{eq:final}
  \left(\begin{array}{c}
    u\\v\\X\\Y
   \end{array}\right)\,
   \mapsto\, \gamma_{4D}[v,X,Y](u)=
   \left(\begin{array}{c}
     u\\v+\Theta(u)\,\h^\im+\up
     \,\frac{1}{2}\big((\h_{,X}^\im)^2+(\h_{,Y}^\im)^2\big) \\
     X+\up\h_{,X}^\im\\ Y+\up\h_{,Y}^\im
   \end{array}\right)
   =
   \left(\begin{array}{c}
     \U\\\V\\x\\y
   \end{array}\right)\,.
\end{equation}
We have hence formally recovered the `discontinuous transformation' from a special family of global limiting null geodesics, which can be interpreted as the geodesics of the distributional spacetime  \eqref{5D_imp}, \eqref{Constraint_Hyp}, cf. the penultimate paragraph of Section \ref{sec:ass}.

Moreover, the behaviour of these geodesics can be vividly depicted, see Figure \ref{fig:nullgenerators} (cf.\ \cite[Fig.\ 2]{PSSS:19}, and \cite[Fig.\ 6]{PS:22}) in a way that directly generalizes Penrose's original illustration for the $\Lambda=0$-case in \cite[Fig.\ 2]{Pen:72}:  The null geodesic generators of (A)dS starting in the `lower half' $(A)d\mathcal{S}^-$ (i.e., for $U=\lambda<0$) due to their interaction with the wave impulse do not continue as unbroken null generators into $(A)d\mathcal{S}^+$ (indicated by the dashed line in the upper left parts). Rather they jump at $\{U=\lambda=0\}$, cf.\ the $\Theta$-term in \eqref{Global4DNullGeod_uParam} (hence in \eqref{eq:final}) but also get refracted, cf.\ the $\lambda_+$-terms, to become the appropriate null generators of $(A)d\mathcal{S}^+$.

\begin{figure}[h]
\centering
\begin{minipage}{.45\textwidth}
  \centering
  \includegraphics[width=\linewidth]{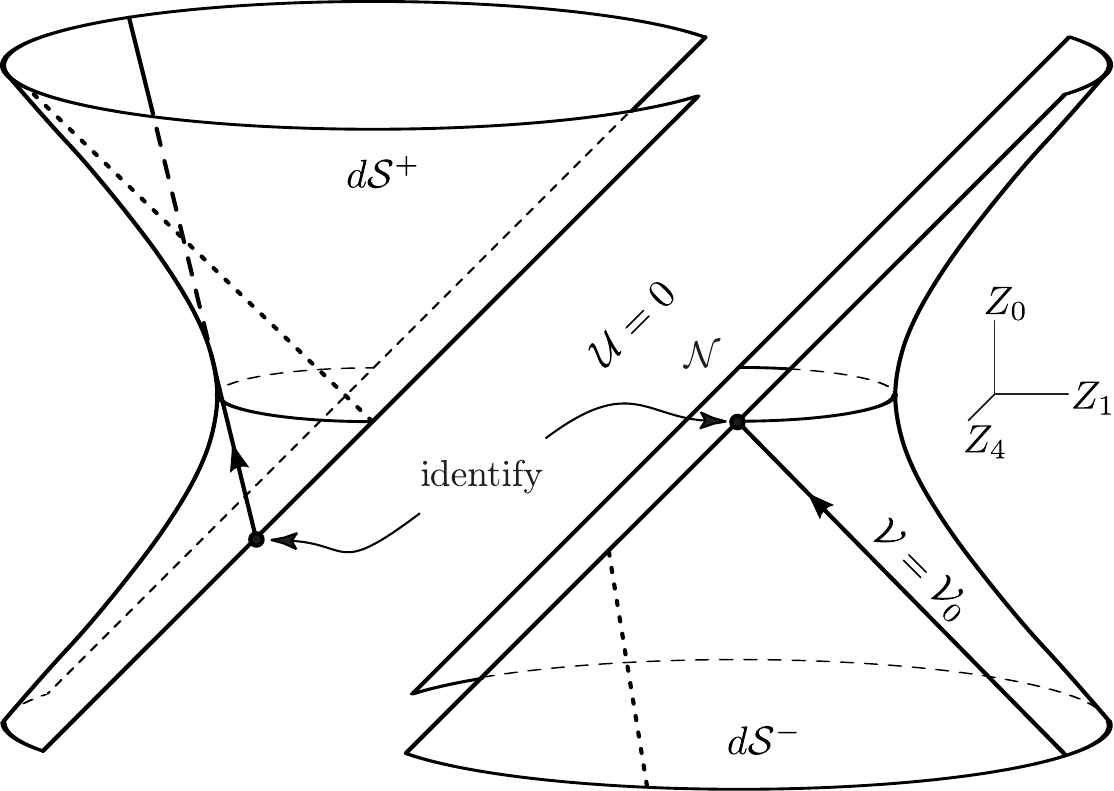}
\end{minipage}%
\hfill
\begin{minipage}{.45\textwidth}
  \centering
  \includegraphics[width=.7\linewidth]{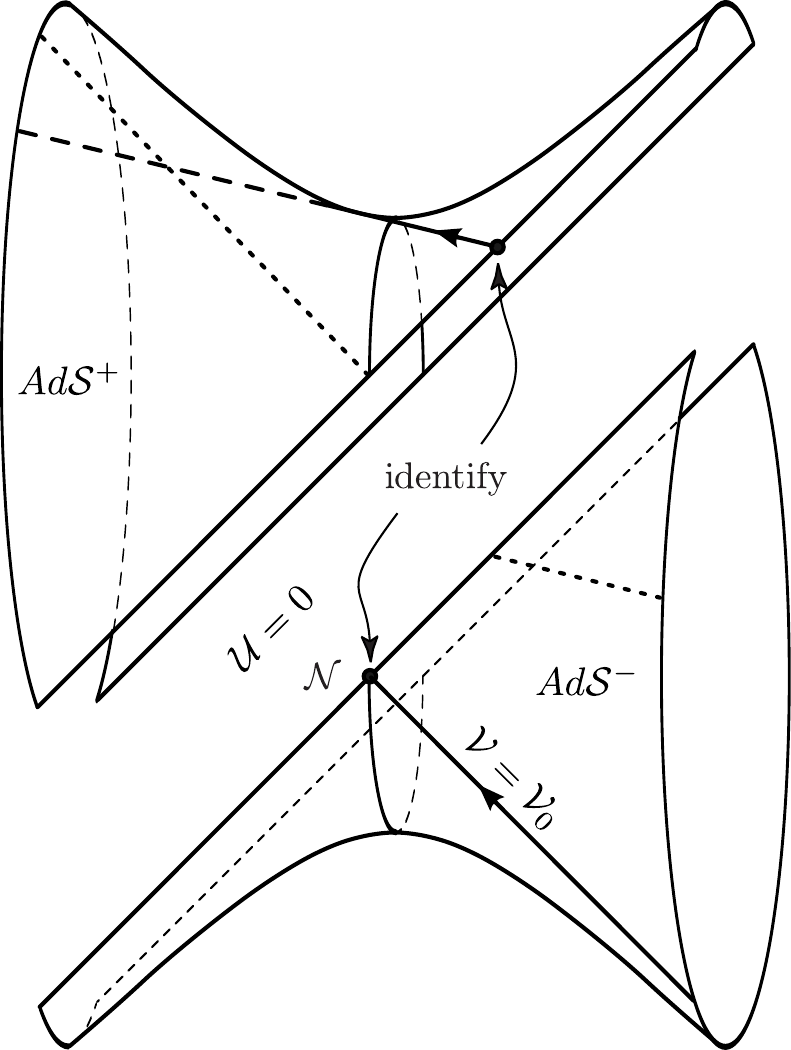}
\end{minipage}
\caption{The null geodesic generators of (A)dS and their interaction with the impulsive wave.}
\label{fig:nullgenerators}
\end{figure}

The upshot is that these `broken geodesic generators' basically are the coordinate lines of the \emph{coordinate system in which the metric becomes continuous}, i.e., \eqref{conti}. But we do not only have these limiting geodesics at hand but also the generalised geodesics of Theorem \ref{thm:ex+un}. This will allow us to geometrically regularise the transformation, which we will explicitly do next.

\subsection{The geometrically regularised transformation} 
Using the ideas laid out above we now give the explicit form of the transformation in nonlinear generalised functions. To begin with, note that we only have carried out the nonlinear distributional analysis of the geodesics in the $5D$-form\footnote{As explained in  \cite[p.\ 3]{PSSS:16}, the geodesic equations in the $4D$-form are even wilder and a direct approach seems to be out of reach.}. Therefore we split up the transformation in the following way: starting from the `continuous' $4D$-coordinates $(u,v,Z)\equiv(u,v,X,Y)$ of \eqref{conti}  we first use the transformation \eqref{CoordTrans_4D_to_5D}\footnote{Observe that \eqref{CoordTrans_4D_to_5D} literally transforms the conformally flat coordinates $({\mathcal{U},\mathcal{V},\eta})$ to the $5D$-coordinates. But here we transform data `in front' of the wave, where $({\mathcal{U},\mathcal{V},\eta})$ are trivially related to $(u,v,Z)$, cf.\ \eqref{ro:trsf} and so in \eqref{CoordTrans_4D_to_5D} we have to replace $({\mathcal{U},\mathcal{V},\eta})$ with  $(u,v,Z)$.} to go to the $5D$-coordinates $(U,V,Z_p)$. Then we use the regularised geodesics to transform 
\begin{equation}\label{eq:regtrsf}
    (U,V,Z_p)\mapsto T_\eps(U,V,Z_p):=\gamma^\eps_{5D}[V,Z_p](U)=:(\bar U_\eps,\bar V_\eps,\bar Z_{p\eps})\,,
\end{equation} 
where\footnote{We here write the regularisation parameter as a superscript rather than as a subscript to make the notation more appealing.} $[(\gamma^\eps_{5D}[V,Z_p])_\eps]$ are the generalised solutions of the geodesic equations provided by Theorem \ref{thm:ex+un} with data
constructed from the seed geodesic $\gamma^g_{5D}$ of \eqref{5DGeodFin}, i.e., the null generator with data $(V,Z_p,\dot V,\dot Z_p)$ as in \eqref{GenVel},  but now derived from $(v,X,Y)$ instead of $(\V_0,x_0,y_0)$. We will specify this data explicitly below but first we turn to the final final part of the transformation. 
For this we use the inverse of \eqref{CoordTrans_4D_to_5D}, i.e.,
\begin{equation}
\U=\Omega \bar U \,, \qquad   \V=\Omega \bar V \,, \qquad x=\Omega \bar Z_2 \,, \qquad y=\Omega
\bar Z_3 \,, \qquad \hbox{with} \qquad \Omega=\frac{2a}{\bar Z_4+a} \,,
\label{CoordTrans_5D_to_4D}
\end{equation}
componentwise (that is for fixed $\eps$), to go from the $5D$-coordinates $(\bar U_\eps,\bar V_\eps,\bar Z_{p\eps})$ to  $4D$ coordinates $(\U_\eps,\V_\eps,x_\eps,y_\eps)$, which provide a regularisation of the $4D$ `distributional' system. That is, overall the transformation we are going to employ takes the form
\begin{align}\label{eq:trsf-overall}
    \left(\begin{array}{c}u\\v\\X\\Y\end{array}\right)\quad 
    \overset{\eqref{CoordTrans_4D_to_5D}}{\longmapsto}\quad
    \left(\begin{array}{c}U\\V\\Z_p\end{array}\right)\quad 
    \overset{T_\eps}{\longmapsto}\quad
    \gamma^\eps_{5D}[V,Z_p](U)\ = \ 
    \left(\begin{array}{c}\bar U_\eps\\\bar V_\eps\\\bar Z_{p_\eps}\end{array}\right)\quad 
    \overset{\eqref{CoordTrans_5D_to_4D}}{\longmapsto}\quad
    \left(\begin{array}{c}\U_\eps\\ \V_\eps\\ x_\eps\\ y_\eps\end{array}\right)\,.
\end{align}
This is the sensible geometric regularisation of the `discontinuous transformation' \eqref{eq:final}, which we have been aiming for.

Since the first and the third map in the overall transformation \eqref{eq:trsf-overall}  are (classical smooth) diffeomorphisms it is sufficient to restrict our nonlinear distributional analysis of the transformation to $T_\eps$. Therefore we do not need to take into account that the data $(V,Z_p,\dot V,\dot Z_p)$ of \eqref{5DGeodFin} is derived from the $4D$-data $(v,X,Y)$ (according to \eqref{GenVel}). We only have to observe the special form of the null geodesic generators $\gamma^g_{5D}$. In fact, we have 
\begin{equation}\label{eq:gen5d}
    \gamma^g_{5D}(0)=(0,V,Z_p)\quad\text{and consequently}\quad\dot\gamma^g_{5D}(0)=\Big(1,\, -\beta V,\,-\beta Z_i,\,-\beta (Z_4+a)\Big)\,,
\end{equation}
where $ \beta=-(\Lambda/6)\,\Omega V$, and $\Omega=(2a)/(Z_4+a)$.
\medskip

Let us now derive the explicit form of  $T_\eps$. According to \eqref{eq:trsf-overall} we set
\begin{equation}\label{eq:ggwd}
    T_\eps(U,V,Z_p)=\gamma^\eps_{5D}[V,Z_p](U),
\end{equation}
where $\gamma^\eps_{5D}$ solves \eqref{eq:geos} with data
\begin{equation}\label{eq:data-final}
    \gamma^\eps_{5D}(U=-\eps)=\gamma^g_{5D}(-\eps)\quad\text{and}\quad \dot\gamma^\eps_{5D}(U=-\eps)=\dot\gamma^g_{5D}(-\eps)\,,
\end{equation}
i.e., data constructed from the null geodesic generator $\gamma^g_{5D}$ with data \eqref{eq:gen5d} as seed. Since these data essentially reduce to the four parameters $(V,Z_p)$, we will refer to \emph{$\gamma^\eps_{5D}[V,Z_p]$ as the global geodesics with data $(V,Z_p)$.}
Using \eqref{eq:gen5d} we find
\begin{align}
    \gamma^\eps_{5D}(-\eps)&=\Big(-\eps,\, (1+\beta\eps)V,\,(1+\beta\eps)Z_i,\,(1+\beta\eps)Z_4+\beta\eps a\Big),\ \text{and}\\
    \dot\gamma^\eps_{5D}(-\eps)&=\Big(1,\,-\beta V,\,-\beta Z_i,\,-\beta (Z_4+a)\Big)\,.
\end{align}
Now we may write $\gamma^\eps_{5D}[V,Z_p](U)=(\bar U_\eps,\bar V_\eps,\bar Z_{p_\eps})[V,Z_p](U)$ by using \eqref{eq:geos} as 
\begin{equation}
    \gamma^\eps_{5D}[V,Z_p](U)= \gamma^\eps_{5D}[V,Z_p](-\eps)+\dot \gamma^\eps_{5D}[V,Z_p](-\eps)\,(U+\eps)
    +\int\limits_{-\eps}^U\int\limits_{-\eps}^s\, \ddot \gamma^\eps_{5D}[V,Z_p](r) \, dr\,ds\,.
\end{equation}
To do so explicitly we use the following abbreviations for the terms appearing on the r.h.s.\ of \eqref{eq:geos}
\begin{align}\label{eq:Delta}
     \Delta_\eps(r)&:=\frac{1}{2}\, {\dot{\bar{U}}_\eps^2(r)}\, \tilde{G}_\eps(r)-\dot{\bar{U}}_\eps(r)\, 
     \frac{d}{dr}\,\Big(H\big(\bar Z_{p\eps}(r)\big)\,\delta_\eps\big(\bar{U}_\eps(r)\big)\,\bar{U}_\eps(r)\Big)\ \text{with}\\
     \label{eq:Gtilde}
      \tilde{G}_\eps(r)&:=\delta^{pq}\,H_{,p}\big(\bar Z_{w\eps}(r)\big)\,\delta_{\eps}\big(\bar U_\eps(r)\big)\,\bar Z_{q\eps}(r)  + H\big(\bar Z_{w\eps}(r)\big)\,\delta'_\eps\big(\bar U_\eps(r)\big)\,\bar U_\eps(r)\,,\\
     N_\eps(r)&:=\sigma\,a^2-\bar{U}_\eps^2(r)\, H\big(\bar Z_{p\eps}(r)\big)\, \delta_\eps(\bar{U}_\eps(r))\,.\label{eq:N}
\end{align}
With this we find
\begin{align}\label{eq:regtrsf-u}
 \bar U_\eps[V,Z_p](U) &= U - \int\limits_{-\eps}^U\int\limits_{-\eps}^s \frac{\Delta_\eps(r)}{N_\eps(r)}\ \bar U_\eps(r)\ \dd r\dd s\,,\\
 \bar V_\eps[V,Z_p](U)&= (1-\beta U)V 
  +\frac{1}{2}\, \int\limits_{-\eps}^U\int\limits_{-\eps}^s H(\bar Z_{p\eps})(r)\,\delta'_\eps\big(\bar U_\eps(r)\big)\, \dot{\bar{U}}_\eps^2(r)\ \dd r\dd s \nonumber \\
  &\qquad +\int\limits_{-\eps}^U\int\limits_{-\eps}^s\delta^{pq}\, H_{,p}\big(\bar Z_{w\eps}(r)\big)\, \delta_\eps\big(\bar U_\eps(r)\big)\, 
  \dot{\bar Z}_{q\eps}(r)\, \dot{\bar U}_\eps(r)\ \dd r\dd s \\
  &\qquad - \int\limits_{-\eps}^U\int\limits_{-\eps}^s \frac{\Delta_\eps(r)}{N_\eps(r)}\ 
  \Big(\bar V_\eps(r)+H\big(\bar Z_{p\eps}(r)\big)\, \delta_\eps\big(\bar U_\eps(r)\big)\, \bar U_\eps(r)  \Big) 
  \ \dd r\dd s\,,\nonumber\\
 \bar Z_{i\eps}[V,Z_p](U)&= (1-\beta U)Z_i 
   +\frac{1}{2}\ \int\limits_{-\eps}^U\int\limits_{-\eps}^s H_{,i}\big(\bar Z_{p\eps}(r)\big)\, \delta_\eps\big(\bar U_\eps(r)\big)\, \dot{\bar U}_\eps^2(r)\ 
  \dd r\dd s\nonumber\\
  &\qquad - \int\limits_{-\eps}^U\int\limits_{-\eps}^s \frac{\Delta_\eps(r)}{N_\eps(r)}\ \bar Z_{i\eps}(r)\ \dd r\dd s\,,\label{eq:regtrsf-zi}\\
  \bar Z_{4\eps}[V,Z_p](U)&=(1-\beta U)\ \frac{2a}{\alpha} - a
   +\frac{\sigma}{2}\ \int\limits_{-\eps}^U\int\limits_{-\eps}^s H_{,4}\big(\bar Z_{p\eps}(r)\big)\, \delta_\eps\big(\bar U_\eps(r)\big)\, \dot{\bar U}_\eps^2(r)\ 
  \dd r\dd s \nonumber\\
  &\qquad - \int\limits_{-\eps}^U\int\limits_{-\eps}^s \frac{\Delta_\eps(r)}{N_\eps(r)}\ \bar Z_{4\eps}(r)\ \dd r\dd s\,.\label{eq:regtrsf-z4}
  \end{align}
Finally, we observe that the `data parts' of the above equations in fully explicit form read
\begin{align}
(1-\beta U)V&=\Big(1+\frac{\Lambda}{3}\ \frac{a}{Z_4+a}\ V\, U\Big)\ V\,,\\
(1-\beta U)Z_i&=\Big(1+\frac{\Lambda}{3}\ \frac{a}{Z_4+a}\ V\, U\Big)\ Z_i\,,\\
(1-\beta U)\ \frac{2a}{\alpha}-a&=Z_4+a\Big(1+\frac{\Lambda }{3}\ V\, U\Big) - a\,.
\end{align}

\section{Analysis of the regularised transformation}\label{sec:art}
In this section we finally establish that the geometrically regularised transformation $(T_\eps)_\eps$ is a representative of a \emph{generalised diffemorphism} $T=[(T_\eps)_\eps]$ in the sense of nonlinear distributional geometry and thus give a precise mathematical meaning to the physical equivalence of the distributional and the continuous form of the metric. 

The main issue here is of course the subtle interplay between the image of $T_\eps$ and the domain of the inverse. In particular, we have to make sure that the intersection of all images $\cap_{\eps>0}\, \mathrm{im}T_\eps$ contains an open set, which can act as the domain of the inverse. More precisely we use the following definition.

\begin{definition}[generalised diffeomorphism]
Let $\Omega\subseteq \R^n$ be open. We call $T\in\G[\Omega,\R^n]$ a \emph{generalised diffeomorphism} if there exists
$\eta>0$ such that
\begin{enumerate}
 \item[(i)] \label{def-gen-diff-1} There exists a representative $(t_\ep)_\ep$ of $T$ such that $t_\ep\colon\Omega\rightarrow
t_\ep(\Omega)=:\tilde{\Omega}_\ep$ is a diffeomorphism for all $\ep\leq \eta$ and there exists $\tilde{\Omega}\subseteq\R^n$
open with $\tilde{\Omega}\subseteq \bigcap_{\ep\leq\eta}\tilde{\Omega}_\ep$.
 \item[(ii)] The inverses $(t_\ep^{-1})_\ep$ are moderate and c-bounded, i.e., $(t_\ep^{-1})_\ep\in\G[\tilde{\Omega},\R^n]$ and there exists
$\Omega_1\subseteq\R^n$ open, $\Omega_1\subseteq\bigcap_{\ep\leq\eta}t_\ep^{-1}(\tilde{\Omega})$.
 \item[(iii)] Setting $T^{-1}:=[(t_\ep^{-1}|_{\tilde{\Omega}})_\ep]$, the compositions $T\circ T^{-1}$ and $T^{-1}\circ
T|_{\Omega_1}$ are elements of $\G(\tilde{\Omega},\R^n)$ respectively $\G(\Omega_1, \R^n)$. (It is then clear that $T\circ
T^{-1} = id_{\tilde{\Omega}}$ and $T^{-1}\circ T|_{\Omega_1} = id_{\Omega_1})$.
\end{enumerate}
\end{definition}

This definition, of course, extends the smooth theory, cf.\ \cite[Supplement 2.5A]{AMR:88} for a version `quantifying' the neighbourhoods in the classical inverse function theorem. 

We will show that $T$ is a generalised diffeomorphism and we will split up this task in two subsections.

\subsection{$T$ as a generalised function}
To begin with, we have to establish that the regularised transformation $T_\eps$ gives rise to a c-bounded generalised function on $\R^5$, more precisely that $[(T_\eps)_\eps]\in\G[\R^5,\R^5]$. Recall that by \eqref{eq:ggwd} we have $T_\eps(U,V,Z_p)=\gamma^\eps_{5D}[V,Z_p](U)$ and that so far we have only considered $\gamma^\eps_{5D}$ as a function of $U$. Indeed, Theorem \ref{thm:ex+un} guarantees that $[(\gamma^\eps_{5D})_\eps]\in\G[\R,\R^5]$, but now we have to additionally deal with the dependence of $\gamma^\eps_{5D}[V,Z_p](U)$ on $V$ and $Z_p$. 
\medskip

We first establish an appropriate `uniformity of domains' of $\gamma^\eps_{5D}$ in $(V,Z_p)$. To this end we have to delve into the fixed point argument of \cite[Sec.\ III A]{SS:17} that leads to the construction of a local solution candidate for the geodesic equation. Recall from there, or observe from \eqref{eq:geos:G} that the $V$-equation decouples from the system and can simply be integrated after the rest of the system has been solved. So we only have to consider the existence statement \cite[Prop.\ 3.2]{SS:17} which guarantees a local solution for small $\eps$ of a model system which neglects the $V$-equation. There the existence of unique solutions is established on the interval $[\alpha_\eps,\alpha_\eps+\eta]$ when $\eps\leq\eps_0$, where $\eps_0$ and $\eta$ have to satisfy the explicit bounds given in equations \cite[(29), (30)]{SS:17} and the unnumbered equation on top of p.\ 9, respectively. These are explicit bounds in terms of the coefficient functions of the system
(local $L^\infty$-norms of $H$ and $DH$, as well as the $L^1$-norm of $\rho$ and $\rho'$) and the seed data. The latter in our case simplifies to the $Z_p$-components of 
\begin{equation}\label{eq:data-unif-domains}
    \gamma^g_{5D}(0)=(0,V,Z_p),\quad \dot\gamma^g_{5D}(0)=\Big(1,\, -\beta V,\,-\beta Z_i,\,-\beta (Z_4+a)\Big)\,,
\end{equation}
cf.\ \eqref{eq:gen5d}. By inspection it becomes obvious that these estimates can be maintained if the seed data (there $x^0$ and $\dot x^0$) vary in a neighborhood (here of $Z_p$) and that hence $\eta$ and $\eps_0$ can be chosen uniformly on compact neighbourhoods of $Z_p$. Observing that for $\alpha_\eps+\eta\geq\beta_\eps$ the solution again reduces to a background geodesic and using a simple exhaustion argument as in \cite[Prop.\ 4.3]{EG:11}  we obtain the following 
result.
\begin{Lemma}[Uniform domains]
 Given any compact set $K$ in $\R^4$ there is $\eps_0(K)$ such that  $\gamma^\eps_{5D}[V,Z_p]$ is the unique globally defined geodesic of \eqref{eq:ggwd} for all data $(V,Z_p)\in K$ and for all $\eps\leq\eps_0(K)$.
\end{Lemma}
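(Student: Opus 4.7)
My plan is to leverage the decoupling of the $V$-equation and then adapt the local existence result \cite[Prop.\ 3.2]{SS:17} in a parameter-dependent fashion. Recall from \eqref{eq:geos:G} that the $V$-component can be obtained by a single integration once the remaining components are known, so it suffices to produce a uniform existence statement for the reduced $(U_\eps,Z_{p\eps})$-subsystem and close by quadrature; the resulting $\bar V_\eps$ will automatically depend smoothly on the seed data.

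The key step is a revisit of the fixed-point argument of \cite[Sec.\ III A]{SS:17}. The explicit bounds \cite[(29), (30)]{SS:17} and the inequality on top of p.~9 there, which determine the admissible $\eps_0$ and the length $\eta$ of the existence interval, only involve local $L^\infty$-norms of $H$ and $DH$, the $L^1$-norms of $\rho$ and $\rho'$, and the initial data at $\alpha_\eps$. In our present setting \eqref{eq:data-final} these data are derived from $(V,Z_p)$ and the constant $\beta=-(\Lambda/6)\,\Omega V$ with $\Omega=2a/(Z_4+a)$, cf.\ \eqref{eq:gen5d}. As $(V,Z_p)$ ranges over $K\Subset\R^4$ (first shrunk, if necessary, to avoid the exceptional locus $\{Z_4=-a\}$), these constants and the corresponding $L^\infty$-norms of $H,DH$ on the relevant compact spatial region stay uniformly bounded. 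Hence both $\eps_0$ and $\eta$ can be chosen uniformly in $(V,Z_p)\in K$, yielding a common local solution on $[\alpha_\eps,\alpha_\eps+\eta(K)]$ for all $\eps\leq\eps_0(K)$.

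To globalise, I would next shrink $\eps_0(K)$ further so that the exit time $\beta_\eps$ guaranteed by \cite[Lem.\ A2]{SSLP:16} satisfies $\beta_\eps-\alpha_\eps<\eta(K)$ uniformly in $K$ (the estimate $\beta_\eps\searrow 0$ is itself uniform because the $L^\infty$-bound on $\dot U_\eps$ is). Beyond $\beta_\eps$, the system reduces to the smooth background geodesic equation on the (A)dS hyperboloid, which is globally well-posed and depends smoothly on the data; likewise, before $\alpha_\eps$ the curve coincides with the null generator \eqref{5DGeodFin}, which is already global. Smoothly gluing these three pieces along the overlap, in the spirit of the exhaustion argument in \cite[Prop.\ 4.3]{EG:11}, yields the claimed global $\gamma^\eps_{5D}[V,Z_p]$; uniqueness is inherited from uniqueness in each regime together with the smooth matching at $\alpha_\eps$ and $\beta_\eps$.

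The main obstacle I anticipate is the parameter-dependent bookkeeping in the fixed-point step. One has to check that the Lipschitz constants and a priori bounds entering the contraction estimate of \cite[Sec.\ III A]{SS:17}---in particular those controlling $\tilde G_\eps$, $1/N_\eps$, and the derivative term $(HDU)\dot{\ }$ appearing in \eqref{eq:Delta}, \eqref{eq:N}---are dominated by expressions continuous in $(V,Z_p)$ alone. The uniform $\eps_0(K)$ then emerges by compactness, but this ultimately hinges on showing that the $L^\infty$-bounds on $\bar U_\eps$ and $\bar Z_{p\eps}$ produced by the iteration are themselves locally uniform in the data, which follows from the manifest uniform boundedness of $\gamma^g_{5D}(-\eps)$ and $\dot\gamma^g_{5D}(-\eps)$ over $K$ read off from \eqref{eq:gen5d}.
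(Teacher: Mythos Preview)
Your proposal is correct and follows essentially the same approach as the paper: decouple the $V$-equation, revisit the fixed-point argument of \cite[Sec.~III~A]{SS:17} to observe that the explicit bounds \cite[(29), (30)]{SS:17} (and the one on top of p.~9) determining $\eps_0$ and $\eta$ depend only on local $L^\infty$-norms of $H$, $DH$, the $L^1$-norms of $\rho$, $\rho'$, and the seed data, hence can be chosen uniformly for $(V,Z_p)\in K$, and then globalise by noting that beyond $\beta_\eps$ the solution is a background geodesic and invoking the exhaustion argument of \cite[Prop.~4.3]{EG:11}. Your write-up is slightly more explicit about the role of $V$ through $\beta$ and about avoiding the locus $\{Z_4=-a\}$, but the argument is otherwise identical to the paper's.
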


Next we deal with the c-boundedness of $\gamma^\eps_{5D}[V,Z_p]$. Observe that c-boundedness as a function of $U$ is already provided by Theorem \ref{thm:ex+un}, essentially proved in \cite[Prop.\ 4.1 and Appendix A]{SSLP:16}. We now have to see that $\gamma^\eps_{5D}$ is also uniformly bounded if we vary $V$ and $Z_p$ in a compact set. This, however, can also be accomplished by an inspection. For the $Z_p$-components we have to again look into the fixed point argument, more precisely to \cite[Appendix A]{SSLP:16}. The constant $C_2$ of \cite [(A.6)]{SSLP:16} that bounds the solutions again depends on the coefficient functions of the system and the seed data. It clearly can be chosen uniform on compact neighbourhoods of the data, that is of $Z_p$ as the $U$-speed (there $\dot u^0$) in our case is anyways fixed to $1$. So we obtain uniform boundedness of $\bar U_\eps$, and $\bar Z_{p\eps}$ on compact subsets, as well as of the  derivatives $\dot{\bar U}_\eps=\partial_U\bar U_\eps$, and $\dot{\bar Z}_{p\eps}=\partial_U\bar Z_{p\eps}$. Finally, for the $V$-component we have to inspect the boundedness result in \cite[Prop.\ 4.1(iii)]{SSLP:16}. Again it is easily seen that the constants in \cite[(4.1)]{SSLP:16} vary uniformly if $(V,Z_p)$ vary in a compact set. In total we have established the following `uniformity of bounds' result.

\begin{Lemma}[Uniform bounds]\label{lem:ub}
The global geodesics $\gamma^\eps_{5D}[V,Z_p]$  of \eqref{eq:ggwd} are uniformly bounded on compact subsets of $\R^5$ for $\eps$ small enough. In addition such bounds also apply to $\partial_U\bar U_\eps$, and $\partial_U\bar Z_{p\eps}$.
\end{Lemma}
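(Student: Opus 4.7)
The plan is to piece together three ingredients: (a) the c-boundedness of $\gamma^\eps_{5D}$ as a function of $U$ that is already delivered by Theorem~\ref{thm:ex+un} and ultimately by \cite[Prop.~4.1 and App.~A]{SSLP:16}; (b) the continuous dependence of every constant appearing in the fixed-point estimates of \cite{SS:17,SSLP:16} on the seed data; and (c) a compactness argument upgrading pointwise-in-data control to uniform-in-data control. Concretely, I fix a compact set $K\subseteq\R^5$ and let $K_U\subseteq\R$, $K_{VZ}\subseteq\R^4$ be its projections onto the $U$-factor and the $(V,Z_p)$-factor, respectively. The Uniform-domains lemma supplies $\eps_0=\eps_0(K_{VZ})$ such that $\gamma^\eps_{5D}[V,Z_p]$ is a global geodesic for all $(V,Z_p)\in K_{VZ}$ and $\eps\le\eps_0$; I then split $K_U$ into the pre-wave region $\{U\le-\eps\}$, the wave zone $\{|U|\le\eps\}$, and the post-wave region $\{U\ge\eps\}$ and treat each piece separately.

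On the wave zone $\{|U|\le\eps\}$ I would revisit the Banach fixed-point construction of \cite[App.~A]{SSLP:16}: its bounding constant $C_2$ of \cite[(A.6)]{SSLP:16} is an explicit function of local $L^\infty$-norms of $H$ and $DH$, of $\|\rho\|_{L^1}$ and $\|\rho'\|_{L^1}$, and of the seed data \eqref{eq:data-unif-domains}. Because the latter depend continuously on $(V,Z_p)$, while the $U$-velocity is normalised to $1$ independently of $(V,Z_p)$, one may choose $C_2$ uniformly over the compact set $K_{VZ}$; this simultaneously bounds $\bar U_\eps$, $\bar Z_{p\eps}$ and the derivatives $\partial_U\bar U_\eps$, $\partial_U\bar Z_{p\eps}$ on the wave zone. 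Outside the wave zone the geodesic reduces to a smooth background (A)dS geodesic whose endpoint data at $U=\pm\eps$ depend continuously on $(V,Z_p)$ and converge uniformly on $K_{VZ}$ as $\eps\to 0$; standard smooth dependence on initial data for ODEs then yields uniform bounds on $K_U\setminus[-\eps,\eps]$. For the decoupled $V$-component I would feed the already controlled $\bar U_\eps$, $\bar Z_{p\eps}$ into \cite[Prop.~4.1(iii)]{SSLP:16}, whose bound on $\bar V_\eps$ again has constants continuous in the seed data; the `data part' $(1-\beta U)V$ visible in the integral representation \eqref{eq:regtrsf-u}--\eqref{eq:regtrsf-z4} is trivially bounded on $K$.

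The principal difficulty is therefore not analytic but one of bookkeeping: I have to walk through each estimate in \cite{SS:17,SSLP:16} and verify that every quantity previously treated as a fixed number in fact depends continuously (or at least locally boundedly) on the seed data $(V,Z_p)$. This is precisely the \emph{inspection} alluded to in the paragraph preceding the statement, and once it has been carried out no new estimate is needed; c-boundedness of $[(T_\eps)_\eps]$ as a generalised map on compact subsets of $\R^5$ follows immediately and completes the proof.
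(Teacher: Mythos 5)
Your proposal is correct and follows essentially the same route as the paper: both arguments reduce the claim to an inspection of the fixed-point bounds of \cite[Prop.~4.1 and App.~A]{SSLP:16} (in particular the constant $C_2$ of (A.6) and the constants in (4.1) for the $V$-component), observing that these depend only on local norms of the coefficient functions and continuously on the seed data, hence can be chosen uniformly over a compact set of $(V,Z_p)$ since the $U$-speed is normalised to $1$. Your explicit three-region split of $K_U$ is only a slightly more structured presentation of the same argument.
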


The final task in this subsection is to establish moderateness. We first derive a number of asymptotic estimates which we will also need later on.
\begin{Lemma}\label{lem:NGD} 
Denoting by $\nabla$ any of the derivatives $\partial_V$ and $\partial_{Z_p}$ we have in the regularisation strip $-\eps\leq r\leq\beta_\eps$ and $(V,Z_p)$ varying in a compact set
\begin{align}
    N_\eps(r)&=O(1),\quad \tilde G_\eps(r)=O(\frac{1}{\eps}),\quad 
    \Delta_\eps(r)=O(\frac{1}{\eps})\,, \label{eq:1stest}  \\
    \label{eq:nabla-n}
    \nabla N_\eps &=O(1) \nabla\bar U_\eps + O(\eps)\nabla\bar Z_{p\eps}\,,\\
    \nabla G_\eps(r)&=O(\frac{1}{\eps^2})\nabla\bar U_\eps+O(\frac{1}{\eps})\nabla\bar Z_{p\eps}\,, \\
    \label{eq:nabla-delta}
    \nabla \Delta_\eps &= O(\frac{1}{\eps^2})\nabla \bar U_\eps+
     O(\frac{1}{\eps})(\nabla \bar Z_{p\eps} +\nabla \dot{\bar U}_\eps) + O(1)\nabla\dot{\bar Z}_{p\eps}\,.
\end{align}
\end{Lemma}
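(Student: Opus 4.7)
My approach is a direct bookkeeping calculation based on three standard ingredients. First, throughout the regularisation strip $-\eps\le r\le \beta_\eps$ one has $\bar U_\eps(r)=O(\eps)$, because $\bar U_\eps(-\eps)=-\eps$, $\dot{\bar U}_\eps$ is uniformly bounded by Lemma \ref{lem:ub}, and the exit time satisfies $\beta_\eps\searrow 0$ so the strip has width $O(\eps)$. Second, the model delta net satisfies $\delta_\eps^{(k)}(x)=O(\eps^{-1-k})$ uniformly in $x$, by the very definition $\delta_\eps(x)=\eps^{-1}\rho(x/\eps)$ together with $\rho\in \Cinfc(\R)$. Third, Lemma \ref{lem:ub} provides uniform bounds on $\bar Z_{p\eps}$, $\dot{\bar U}_\eps$ and $\dot{\bar Z}_{p\eps}$ on compacta, while $H$ and all its partials are locally bounded on the relevant compact image set. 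All six claimed estimates follow from combining these inputs.

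The first batch \eqref{eq:1stest} is immediate. In $N_\eps$ one has $\bar U_\eps^2 H\delta_\eps(\bar U_\eps)=O(\eps^2)\cdot O(1)\cdot O(\eps^{-1})=O(\eps)$, hence $N_\eps=\sigma a^2+O(\eps)=O(1)$. In $\tilde G_\eps$ both summands are $O(\eps^{-1})$: the first directly, the second because $\delta_\eps'(\bar U_\eps)\bar U_\eps=O(\eps^{-2})\cdot O(\eps)$. For $\Delta_\eps$, the first summand is $\tfrac12\dot{\bar U}_\eps^{\,2}\tilde G_\eps=O(\eps^{-1})$, and expanding $\tfrac{d}{dr}(H\delta_\eps(\bar U_\eps)\bar U_\eps)$ by the product rule produces three terms of sizes $O(1)$, $O(\eps^{-1})$ and $O(\eps^{-1})$, each contributing at worst $O(\eps^{-1})$ after multiplication by $\dot{\bar U}_\eps$.

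For the gradient estimates, the key observation is that the data $(V,Z_p)$ enter the expressions only through the solution components $\bar U_\eps$, $\bar Z_{p\eps}$ and their $r$-derivatives, so one may apply $\nabla=\partial_V$ or $\partial_{Z_p}$ by the chain rule. Each differentiation of a factor $H^{(\alpha)}(\bar Z_{p\eps})$ produces a term proportional to $\nabla \bar Z_{p\eps}$ with no loss in $\eps$-weight, whereas each differentiation of $\delta_\eps^{(k)}(\bar U_\eps)$ yields $\delta_\eps^{(k+1)}(\bar U_\eps)\nabla \bar U_\eps$, which costs an extra power of $\eps^{-1}$. Running this bookkeeping through the definitions gives \eqref{eq:nabla-n} (the $\nabla\bar Z_{p\eps}$-coefficient inherits the damping from $\bar U_\eps^2=O(\eps^2)$, while the $\nabla\bar U_\eps$-coefficient comes both from $\partial\bar U_\eps^2 = 2\bar U_\eps$ paired with $\delta_\eps$ and from differentiating $\delta_\eps$ paired with $\bar U_\eps^2$), and analogously yields $\nabla\tilde G_\eps=O(\eps^{-2})\nabla\bar U_\eps+O(\eps^{-1})\nabla\bar Z_{p\eps}$.

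The most delicate bookkeeping is $\nabla \Delta_\eps$, since one must also differentiate the $r$-derivative hidden in the second summand of \eqref{eq:Delta}. The worst apparent contributions come from $\nabla\delta_\eps'(\bar U_\eps)=\delta_\eps''(\bar U_\eps)\nabla\bar U_\eps=O(\eps^{-3})\nabla\bar U_\eps$, but every such term appears multiplied by a companion factor $\bar U_\eps=O(\eps)$, collapsing it to the required $O(\eps^{-2})\nabla\bar U_\eps$. The remaining terms produce $O(\eps^{-1})$-coefficients for $\nabla\bar Z_{p\eps}$ and $\nabla\dot{\bar U}_\eps$ and an $O(1)$-coefficient for $\nabla\dot{\bar Z}_{p\eps}$, exactly matching the four gradient entries on the right-hand side of \eqref{eq:nabla-delta}. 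The point of retaining $\nabla\dot{\bar U}_\eps$ and $\nabla\dot{\bar Z}_{p\eps}$ as independent quantities, rather than trying to re-express them, is that their own moderateness bounds will be obtained in the next step of Section \ref{sec:art}; feeding them back into Lemma \ref{lem:NGD} will then close the overall moderateness argument for $T$.
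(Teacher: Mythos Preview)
Your proof is correct and follows essentially the same route as the paper's own argument: both rely on the uniform bounds of Lemma~\ref{lem:ub}, the observation $\bar U_\eps=O(\eps)$ in the strip together with $\delta_\eps^{(k)}=O(\eps^{-1-k})$, and then a straightforward product/chain rule bookkeeping through the definitions \eqref{eq:Delta}--\eqref{eq:N}. Your write-up is in fact somewhat more explicit than the paper's, which only records the computation for $\nabla N_\eps$ and the intermediate estimates for $\tfrac{d}{dr}(H\delta_\eps\bar U_\eps)$ and its $\nabla$-derivative, leaving the remaining cases to ``a similar vein''.
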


\begin{proof}
  The estimates \eqref{eq:1stest} follow direct from the definitions \eqref{eq:Delta}--\eqref{eq:N} observing the boundedness results of Lemma \ref{lem:ub}.
  Similarly we obtain
  \begin{equation}
      \nabla N_\eps(r)=-2\bar U_\eps\nabla \bar U_\eps H\delta_\eps
       -\bar U_\eps^2 DH\nabla Z_{p\eps}\delta_\eps-\bar U_\eps^2 H\delta'_\eps\nabla\bar U_\eps\\
      = O(1) \nabla\bar U_\eps + O(\eps)\nabla\bar Z_{p\eps}\,,
  \end{equation}
  where we have omitted to write out the arguments of $H$, $\delta_\eps$ and the components of $\gamma^\eps_{5D}[V,Z_p]$ explicitly. The result on $\nabla \tilde G_\eps$ simply follows in a similar vein. Finally to derive the estimate on $\Delta_\eps$ first observe that
  \begin{align}
      \frac{d}{dr}(H\delta_\eps\bar U_\eps)&=O(\frac{1}{\eps})\,,\\
      \nabla\frac{d}{dr}(H\delta_\eps\bar U_\eps)&=
      O(\frac{1}{\eps^2})\nabla \bar U_\eps+O(\frac{1}{\eps})(\nabla \bar Z_{p\eps}+\nabla \dot{\bar U}_\eps)+O(1)\nabla\dot{\bar Z}_{p\eps}\,.
  \end{align}
  Now the result again follows along the same lines.
\end{proof}

The next step is to apply Lemma \ref{lem:NGD} to obtain estimates on the first order derivatives of $\bar U_\eps$, $\dot{\bar U}_\eps$, $\bar Z_{p\eps}$, and $\dot{\bar Z}_{p\eps}$. More precisely we have.

\begin{Lemma}[Asymptotic estimates on the first order derivatives]\label{lem:FOD}
We have in the regularisation strip $-\eps\leq r\leq\beta_\eps$ and for $(V,Z_p)$ varying in a compact set 
\begin{eqnarray}
    \partial_V \bar U_\eps=O(\eps^2)\,,\quad
    &\partial_V(\dot{\bar U}_\eps,\bar Z_{p\eps})=O(\eps)\,,\quad 
    &\partial_V\dot{\bar Z}_{p\eps}=O(1)\,, \\
    \partial_{Z_q} \bar U_\eps=O(\eps)\,,\quad
    &\partial_{Z_q}(\dot{\bar U}_\eps,\bar Z_{p\eps})=O(1)\,,\ 
    &\partial_{Z_q}\dot{\bar Z}_{p\eps}=O(1/\eps)\,, 
\end{eqnarray}
as well as
\begin{equation}\label{eq:nabla-n2}
\partial_V N_\eps=O(\eps^2)\,,\quad
\partial_{Z_p} N_\eps=O(\eps)\,,\quad
\partial_V \Delta_\eps=O(1)\,,\quad
\partial_{Z_p} \Delta_\eps=O(1/\eps)\,.
\end{equation}
\end{Lemma}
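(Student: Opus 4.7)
The strategy is a weighted Gronwall bootstrap on the integral representations \eqref{eq:regtrsf-u}--\eqref{eq:regtrsf-z4}, using Lemma \ref{lem:NGD} to control the kernels. Everything takes place on the strip $-\eps\le r\le\beta_\eps$, whose length is $O(\eps)$ by \cite[Lem.\ A2]{SSLP:16}.

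\emph{Step 1: Derivatives of the boundary data at $r=-\eps$.} Since $\bar U_\eps(-\eps)=-\eps$ and $\dot{\bar U}_\eps(-\eps)=1$ are independent of $(V,Z_p)$, both $\partial_V$ and $\partial_{Z_q}$ of these vanish. On the other hand, from the explicit formulas
\begin{equation*}
 \bar Z_{i\eps}(-\eps)=(1+\beta\eps)Z_i,\quad \bar Z_{4\eps}(-\eps)=(1+\beta\eps)Z_4+\beta\eps a,\quad
 \dot{\bar Z}_{p\eps}(-\eps)=-\beta\,(Z_p\text{ or }Z_4+a),
\end{equation*}
and $\beta=-(\Lambda/6)\,\Omega V$ with $\Omega=2a/(Z_4+a)$, one reads off $\partial_V\bar Z_{p\eps}(-\eps)=O(\eps)$, $\partial_V\dot{\bar Z}_{p\eps}(-\eps)=O(1)$ and $\partial_{Z_q}\bar Z_{p\eps}(-\eps)=O(1)$, $\partial_{Z_q}\dot{\bar Z}_{p\eps}(-\eps)=O(1)$. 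These are exactly the orders that appear in the claim, so the boundary contributions are consistent.

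\emph{Step 2: Differentiating the integral equations.} Apply $\nabla\in\{\partial_V,\partial_{Z_q}\}$ to \eqref{eq:regtrsf-u} and \eqref{eq:regtrsf-zi}--\eqref{eq:regtrsf-z4}. The resulting equations take the schematic form
\begin{equation*}
\nabla\bar U_\eps(U)=\text{(data)}-\int_{-\eps}^U\!\!\int_{-\eps}^s\nabla\!\left(\tfrac{\Delta_\eps}{N_\eps}\bar U_\eps\right)\dd r\dd s,
\end{equation*}
and analogously for $\bar Z_{p\eps}$, with an additional source term $\tfrac12\nabla(H_{,p}\delta_\eps\dot{\bar U}_\eps^2)$ (and, for $p=4$, the factor $\sigma$). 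Using the quotient and chain rules together with the estimates \eqref{eq:1stest}--\eqref{eq:nabla-delta} of Lemma \ref{lem:NGD}, every integrand can be bounded by a linear combination of $\nabla\bar U_\eps,\nabla\bar Z_{p\eps},\nabla\dot{\bar U}_\eps,\nabla\dot{\bar Z}_{p\eps}$ with coefficients of sizes $O(\eps^{-2}), O(\eps^{-1}), O(\eps^{-1}), O(1)$, respectively, plus inhomogeneities coming from explicit derivatives of $H,H_{,p}$ evaluated on $\bar Z_{p\eps}$ (which are bounded). Differentiating the corresponding equations for $\dot{\bar U}_\eps$ and $\dot{\bar Z}_{p\eps}$ (i.e.\ a single $\int_{-\eps}^U$ instead of a double integral) gives analogous expressions with one factor of $\eps$ less.

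\emph{Step 3: Weighted Gronwall.} Introduce the weighted sums
\begin{equation*}
F^V_\eps(U):=\eps^{-2}|\partial_V\bar U_\eps|+\eps^{-1}|\partial_V\dot{\bar U}_\eps|+\eps^{-1}|\partial_V\bar Z_{p\eps}|+|\partial_V\dot{\bar Z}_{p\eps}|,
\end{equation*}
\begin{equation*}
F^{Z_q}_\eps(U):=\eps^{-1}|\partial_{Z_q}\bar U_\eps|+|\partial_{Z_q}\dot{\bar U}_\eps|+|\partial_{Z_q}\bar Z_{p\eps}|+\eps\,|\partial_{Z_q}\dot{\bar Z}_{p\eps}|,
\end{equation*}
where the weights are precisely the inverses of the orders asserted in the lemma. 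A direct computation, using the integrand bounds from Step~2 together with the strip length $O(\eps)$ (which gives a factor $\eps$ per single integral, $\eps^2$ per double integral), shows that each weighted term on the right-hand side is bounded by a constant plus $C\int_{-\eps}^U F_\eps^\bullet(s)\dd s$ uniformly in $(V,Z_p)$ varying in a compact set. Gronwall's inequality then yields $F^V_\eps,F^{Z_q}_\eps=O(1)$ on $[-\eps,\beta_\eps]$, which is equivalent to the six bounds claimed.

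\emph{Step 4: Estimates for $N_\eps$ and $\Delta_\eps$.} Substituting the bounds on $\nabla\bar U_\eps,\nabla\bar Z_{p\eps},\nabla\dot{\bar U}_\eps,\nabla\dot{\bar Z}_{p\eps}$ just established into \eqref{eq:nabla-n} and \eqref{eq:nabla-delta} of Lemma~\ref{lem:NGD} gives \eqref{eq:nabla-n2} at once.

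\emph{Main obstacle.} The real work is to choose the weights in Step~3 so that the linear system obtained from Step~2 is \emph{self-consistent}: the $O(\eps^{-2})\nabla\bar U_\eps$ contribution in $\nabla\Delta_\eps$ is a priori the worst term, and its effect is only tamed because $\nabla\bar U_\eps$ carries an extra $\eps^2$ (for $\partial_V$) or $\eps$ (for $\partial_{Z_q}$). Checking that the double-integral structure of the $\bar U_\eps$-equation delivers exactly these extra factors, so that no weight has to be sacrificed, is the heart of the argument.
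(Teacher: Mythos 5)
Your proposal is correct and rests on the same ingredients as the paper's proof -- the integral representations \eqref{eq:regtrsf-u}--\eqref{eq:regtrsf-z4}, the kernel estimates of Lemma \ref{lem:NGD}, the $O(\eps)$ length of the strip, and Gronwall -- but you organise the Gronwall step differently. You run a \emph{single} weighted Gronwall on the functionals $F^V_\eps$ and $F^{Z_q}_\eps$, with weights chosen as the inverses of the target rates; I checked that with these weights every integrand is indeed $O(1)F$ (for $\partial_V$) resp.\ $O(\eps^{-1})F$ under a single integral and $O(1)F$ after weighting (for $\partial_{Z_q}$), so the scheme closes. The paper instead proceeds in a \emph{nested} two-stage fashion: it first bounds $\Psi_A=\max(|\nabla_A\bar U_\eps|,|\nabla_A\dot{\bar U}_\eps|,|\nabla_A\bar Z_{p\eps}|)$ by Gronwall \emph{with $\nabla_A\dot{\bar Z}_{p\eps}$ treated as an external source} (yielding \eqref{eq:psi1}), then closes a second Gronwall on $\nabla_A\dot{\bar Z}_{p\eps}$ alone, and finally feeds the result back to sharpen $\Psi_A$ and the individual components. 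The trade-off is exactly the one you flag as the ``main obstacle'': your weighted functional requires guessing all four rates simultaneously and verifying a priori that the $O(\eps^{-2})\nabla\bar U_\eps$ contribution is compensated by the extra powers of $\eps$ that $\nabla\bar U_\eps$ acquires from the double integral, whereas the nested argument discovers the rates sequentially without an ansatz. Both are valid; your Step 1 on the data terms and Step 4 (substituting back into \eqref{eq:nabla-n} and \eqref{eq:nabla-delta} to get \eqref{eq:nabla-n2}) coincide with what the paper does.
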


\begin{proof}
Since here the $\partial_V$- and $\partial_{Z_p}$-derivatives will part ways we introduce the following notation: First we do not distinguish between the individual $\partial_{Z_p}$'s ($p=1,2,3$) and simply write $\partial_Z$.\footnote{In what follows we will estimate the $Z$-derivatives of the data-term in \eqref{eq:regtrsf-zi} simply by $O(1)$, ignoring the fact that $\partial_{Z_i}\bar Z_{j\eps}$ actually gives a term of the form $O(1)\delta_{ij} $.}  Moreover, we will write $\nabla=(\nabla_1,\nabla_2)=(\partial_V,\partial_Z)$ and also use the notation $\nabla_A$ ($A=1,2$).

We aim for a Gronwall estimate using the integral representation of the respective components of the geodesics \eqref{eq:regtrsf-u}, \eqref{eq:regtrsf-zi}, and \eqref{eq:regtrsf-z4}. However, we will do so in a nested way starting with $\nabla \bar U_\eps$, $\nabla \dot{\bar U}_\eps$, and $\nabla \bar Z_{p\eps}$, while leaving $\nabla \dot{\bar Z}_{p\eps}$ for later treatment. 
Setting
\begin{equation}
     \Psi_A:=\max\Big(|\nabla_A \bar U_\eps|, |\nabla_A \dot{\bar U}_\eps|, |\nabla_A \bar Z_{p\eps}|\Big)\,,
\end{equation}
we obtain from \eqref{eq:regtrsf-u} and from \eqref{eq:nabla-n}, \eqref{eq:nabla-delta} 
\begin{align}\label{eq:c1}
   |\nabla_A \bar U_\eps|\,&=\,\int\limits_{-\eps}^{\beta_\eps}\int\limits_{-\eps}^{\beta_\eps}\Big(O\big(\frac{1}{\eps}\big)\,\Psi_A+ O(\eps) |\nabla_A \dot{\bar Z}_{p\eps}|\Big)\,,\\
   |\nabla_A \dot{\bar U}_\eps|\,&=\,\int\limits_{-\eps}^{\beta_\eps}\Big(O\big(\frac{1}{\eps}\big)\,\Psi_A+ O(\eps) |\nabla_A \dot{\bar Z}_{p\eps}|\Big)\,.
\end{align}
Similarly, using \eqref{eq:regtrsf-zi}, \eqref{eq:regtrsf-z4} we obtain by a lengthy calculation
\begin{align}
    |\partial_V \bar Z_{p\eps}|\,&=\, O(\eps)+ \int\limits_{-\eps}^{\beta_\eps}\int\limits_{-\eps}^{\beta_\eps}\Big(O\big(\frac{1}{\eps}\big)\,\Psi_1+ O(\eps) |\partial_V \dot{\bar Z}_{p\eps}|\Big)\,,\\
     |\partial_{Z} \bar Z_{p\eps}|\,&=\, O(1)+ \int\limits_{-\eps}^{\beta_\eps}\int\limits_{-\eps}^{\beta_\eps}\Big(O\big(\frac{1}{\eps}\big)\,\Psi_2+ O(\eps) |\partial_Z \dot{\bar Z}_{p\eps}|\Big)\,.\label{eq:c4}
\end{align}
Summing up we therefore have
\begin{align}
  \Psi_1\,&=\,O(\eps)+ \int\limits_{-\eps}^{\beta_\eps}\Big(O\big(\frac{1}{\eps}\big)\Psi_1+O(\eps)|\partial_V \dot{\bar Z}_{p\eps}|\Big)\,,\\
  \Psi_2\,&=\,O(1)+ \int\limits_{-\eps}^{\beta_\eps}\Big(O\big(\frac{1}{\eps}\big)\Psi_2+O(\eps)|\partial_Z \dot{\bar Z}_{p\eps}|\Big)\,,
\end{align}
and a first appeal to the Gronwall inequality gives
\begin{align}\label{eq:psi1}
    \Psi_1\,=\,O(\eps)\,\Big(1+\int\limits_{-\eps}^{\beta_\eps}|\partial_V \dot{\bar Z}_{p\eps}|\Big)\,, \quad \mbox{and}\quad 
    \Psi_2\,=\,O(1)+O(\eps)\int\limits_{-\eps}^{\beta_\eps}|\partial_Z \dot{\bar Z}_{p\eps}|\,.
\end{align}
Next we turn to $\nabla \dot{\bar Z}_{p\eps}$ for which we find, again from \eqref{eq:regtrsf-zi}, \eqref{eq:regtrsf-z4}, and \eqref{eq:nabla-n}, \eqref{eq:nabla-delta} 
\begin{align}
   |\partial_V \dot{\bar Z}_{p\eps}|\,&=\,
    \int\limits_{-\eps}^{\beta_\eps} O\big(\frac{1}{\eps^2}\big)\Psi_1 +
    \int\limits_{-\eps}^{\beta_\eps} O(1)    |\partial_V \dot{\bar Z}_{p\eps}|
    \,=\,O(1)+ \int\limits_{-\eps}^{\beta_\eps} O(1) |\partial_V \dot{\bar Z}_{p\eps}|\,,\\
    |\partial_Z \dot{\bar Z}_{p\eps}|\,&=\,
    O(1)+\int\limits_{-\eps}^{\beta_\eps} O\big(\frac{1}{\eps^2}\big)\Psi_2 +
    \int\limits_{-\eps}^{\beta_\eps} O(1)    |\partial_Z \dot{\bar Z}_{p\eps}|
    \,=\,O\big(\frac{1}{\eps}\big)+ \int\limits_{-\eps}^{\beta_\eps} O(1) |\partial_Z \dot{\bar Z}_{p\eps}|\,,
\end{align}
where in both lines the second equality follows from \eqref{eq:psi1}. Now a second appeal to the Gronwall inequality hence leaves us with 
\begin{equation}
    |\partial_V \dot{\bar Z}_{p\eps}|=O(1)\,,\quad\text{and}\quad
    |\partial_Z \dot{\bar Z}_{p\eps}|=O\big(\frac{1}{\eps}\big)\,,
\end{equation} 
which already gives the claim on $\nabla \dot{\bar Z}_{p\eps}$.
Also, it allows us to improve the estimates \eqref{eq:psi1} on $\Psi_A$ to
\begin{eqnarray}
    \Psi_1=O(\eps)\,,\quad\mbox{and}\quad \Psi_2=O(1)\,,
\end{eqnarray}
which upon inserting into \eqref{eq:c1}--\eqref{eq:c4} gives the claims on $\partial \bar U_\eps$, $\partial \dot{\bar U}_\eps$, and $\partial \bar Z_\eps$.
Finally, we insert these estimates into \eqref{eq:nabla-n} and \eqref{eq:nabla-delta} to obtain \eqref{eq:nabla-n2}.
\end{proof}


\begin{Lemma}[Asymptotic estimates on the higher order derivatives]\label{lem:HOD}
Denoting by $\nabla$ any of the derivatives $\partial_V$ and $\partial_{Z_p}$ we have in the regularisation strip $-\eps\leq r\leq\beta_\eps$ and $(V,Z_p)$ varying in a compact set that for any $n$ there is $N$ such that
\begin{align}
   \nabla^n(\bar U_\eps, \dot{\bar U}_\eps,\bar Z_{p\eps}, \dot{\bar Z}_{p\eps})\,=\,O(\eps^{-N})\,. 
\end{align}
\end{Lemma}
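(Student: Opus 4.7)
The natural strategy is induction on $n$. The base case $n=1$ is precisely Lemma~\ref{lem:FOD}. For the inductive step, assume that $\nabla^k(\bar U_\eps,\dot{\bar U}_\eps,\bar Z_{p\eps},\dot{\bar Z}_{p\eps})=O(\eps^{-N_k})$ for some $N_k$ and all $k\le n$, and differentiate the Volterra-type integral representations \eqref{eq:regtrsf-u}, \eqref{eq:regtrsf-zi}, \eqref{eq:regtrsf-z4} a total of $n$ further times with respect to any combination of $V$- and $Z_p$-variables. The key ingredient is a Fa\`a di Bruno / Leibniz expansion of the compositions $\delta_\eps^{(k)}(\bar U_\eps)$, $H^{(k)}(\bar Z_{p\eps})$ occurring inside the integrand, together with the observation that each extra differentiation of a model delta net costs a factor $O(1/\eps)$, while each inner chain-rule factor $\nabla^j \bar U_\eps$ or $\nabla^j \bar Z_{p\eps}$ with $j\le n$ is polynomially bounded by the inductive hypothesis.

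After this expansion, I would separate the resulting terms into two groups. The first group consists of the \emph{top-order} contributions, in which exactly one of the factors is $\nabla^{n+1}\bar U_\eps$, $\nabla^{n+1}\dot{\bar U}_\eps$, $\nabla^{n+1}\bar Z_{p\eps}$ or $\nabla^{n+1}\dot{\bar Z}_{p\eps}$; all other factors in such a term are bounded by $O(\eps^{-M})$ for some finite $M=M(n)$ obtained from the inductive hypothesis and Lemma~\ref{lem:NGD}. The second group collects all remaining \emph{lower-order} contributions, which by the inductive hypothesis are again $O(\eps^{-M'})$ uniformly on the regularisation strip $[-\eps,\beta_\eps]$.

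I would then set up a Gronwall system exactly mimicking the first-order argument in Lemma~\ref{lem:FOD}: let
\[
\Phi_A := \max\bigl(|\nabla^{n+1}_A \bar U_\eps|,\; |\nabla^{n+1}_A \dot{\bar U}_\eps|,\; |\nabla^{n+1}_A \bar Z_{p\eps}|\bigr),
\]
where $\nabla_A$ ranges over all multi-indices of order $n+1$ built from $\partial_V$ and $\partial_{Z_p}$, and treat $\nabla^{n+1}_A\dot{\bar Z}_{p\eps}$ separately. Integrating over the strip of length $O(\eps)$ and applying Gronwall twice (first to control $\Phi_A$ in terms of $\int|\nabla^{n+1}\dot{\bar Z}_{p\eps}|$, then to close the bound on the latter) yields $\Phi_A=O(\eps^{-N})$ and $\nabla^{n+1}\dot{\bar Z}_{p\eps}=O(\eps^{-N})$ for some $N$ depending only on $n$.

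The main obstacle is not analytic but combinatorial: one must verify that, although the chain-rule expansion produces a huge number of terms, each of them accumulates only \emph{finitely many} factors of $\eps^{-1}$ — controlled from above by the order of differentiation hitting $\delta_\eps$ plus the (inductively polynomial) losses from factors $\nabla^j \bar U_\eps$, $\nabla^j \bar Z_{p\eps}$. Since the statement only asks for the existence of \emph{some} exponent $N$, a crude Fa\`a di Bruno count suffices and no sharp tracking of the exponent is needed; this is precisely what makes the inductive Gronwall scheme work.
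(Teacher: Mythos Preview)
Your overall strategy---induction on $n$, Leibniz/Fa\`a di Bruno expansion of the integrands in \eqref{eq:regtrsf-u}--\eqref{eq:regtrsf-z4}, isolation of the single top-order factor, and a nested Gronwall closing---is exactly the route the paper takes. The organisation into $\Phi_A=\max(|\nabla^{n+1}\bar U_\eps|,|\nabla^{n+1}\dot{\bar U}_\eps|,|\nabla^{n+1}\bar Z_{p\eps}|)$ with $\nabla^{n+1}\dot{\bar Z}_{p\eps}$ handled separately also matches.

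There is, however, one point where your sketch is too crude and, as written, would not close. You say the coefficient multiplying the top-order factor is ``$O(\eps^{-M})$ for some finite $M$'' and that ``no sharp tracking of the exponent is needed''. For the \emph{lower-order} remainder this is indeed fine, but for the \emph{Gronwall kernel}---the coefficient of $\Phi_A$ under the integral---it is not: if that coefficient were $O(\eps^{-M})$ with $M>1$, then after integrating over the strip of width $O(\eps)$ Gronwall would produce a factor $\exp\bigl(O(\eps^{\,1-M})\bigr)$, which is not $O(\eps^{-N})$ for any $N$. So one must verify that the top-order coefficient is in fact $O(1/\eps)$. This is not automatic from a bare Fa\`a di Bruno count, because the integrands contain $\delta'_\eps(\bar U_\eps)$ and, via $\frac{d}{dr}(H\delta_\eps\bar U_\eps)$ in $\Delta_\eps$, effectively $\delta''_\eps$; the top-order piece $\delta''_\eps(\bar U_\eps)\,\nabla^{n+1}\bar U_\eps$ carries a coefficient $O(1/\eps^3)$ before any cancellation with the extra factors of $\bar U_\eps=O(\eps)$ and the double integration.

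The paper deals with this by \emph{not} expanding $\nabla^{n+1}(\Delta_\eps/N_\eps)$ directly into $\Phi_A$, but instead keeping it as an auxiliary unknown carrying the explicit prefactor $O(\eps)$ (resp.\ $O(\eps^2)$) coming from $\bar U_\eps$ and one integration. A first Gronwall then gives $\Phi_A=O(\eps^2)\,\nabla^{n+1}(\Delta_\eps/N_\eps)+O(\eps^{-N})$; a second Gronwall, applied to the integral inequality for $\nabla^{n+1}(\Delta_\eps/N_\eps)$ itself, yields $\nabla^{n+1}(\Delta_\eps/N_\eps)=O(1)\,\nabla^{n+1}\dot{\bar Z}_{p\eps}+O(\eps^{-N})$; only then does a third Gronwall close on $\nabla^{n+1}\dot{\bar Z}_{p\eps}$. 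In short, there are three Gronwall steps rather than two, and the intermediate one is precisely what guarantees the kernel stays $O(1/\eps)$. If you incorporate this bookkeeping, your argument goes through and coincides with the paper's.
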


\begin{proof}
 We proceed by induction. Clearly Lemma \ref{lem:FOD} provides the basis of induction. So assume that we have $\nabla^n(\bar U_\eps, \dot{\bar U}_\eps,\bar Z_{p\eps}, \dot{\bar Z}_{p\eps})=O(\eps^{-M})$ for some $M$.  We again aim for a nested Gronwall argument for the highest order derivatives. Staring with $\bar U_\eps$, and $\dot{\bar U}_\eps$, we find using the integral representation \eqref{eq:regtrsf-u}
 \begin{align}
     \nabla^{n+1}\bar U_\eps\,
     &=\, \int\limits_{-\eps}^{\beta_\eps} O(\eps^2)\nabla^{n+1}\Big(\frac{\Delta_\eps}{N_\eps}\Big)+ 
     \int\limits_{-\eps}^{\beta_\eps} O(1) \nabla^{n+1}\bar U_\eps + O(\eps^{-N})
     \,,\\
     \nabla^{n+1}\dot{\bar U}_\eps\,
     &=\, \int\limits_{-\eps}^{\beta_\eps} O(\eps)\nabla^{n+1}\Big(\frac{\Delta_\eps}{N_\eps}\Big)+ 
     \int\limits_{-\eps}^{\beta_\eps} O\big(\frac{1}{\eps}\big) \nabla^{n+1}\bar U_\eps + O(\eps^{-N})
     \,,
 \end{align}
 where we have only retained the highest order terms explicitly and estimated all lower order terms by some large inverse power of $\eps$. Next we deal with the term $\nabla^{n+1}\bar Z_{p\eps}$ for which we find from \eqref{eq:regtrsf-zi}, and \eqref{eq:regtrsf-z4}
 \begin{equation}\label{eq:est-Zp}
    \nabla^{n+1}\bar Z_{p\eps}\,=\,
    \int\limits_{-\eps}^{\beta_\eps} \left(O\big(\frac{1}{\eps}\big)\,\Big(\nabla^{n+1}\bar U_{\eps}+\nabla^{n+1}\dot{\bar U}_{\eps}+\nabla^{n+1}\bar Z_{p\eps}\Big) + O(\eps)\nabla^{n+1}\Big(\frac{\Delta_\eps}{N_\eps}\Big)\right) + O(\eps^{-N})\,,
 \end{equation}
 simply collecting all lower order terms in the final $O(\eps^{-N})$-estimate.
 Observe that the (critical) term involving $\delta_\eps(\bar U_\eps)$ does not produce any (high) inverse powers of $\eps$ in the highest order terms $\nabla^{n+1}(\bar U_{\eps},\dot{\bar U}_{\eps},\bar Z_{p\eps})$.

 Now we set $\Psi=\max(|\nabla^{n+1}\bar U_{\eps}|,|\nabla^{n+1}\dot{\bar U}_{\eps}|,|\nabla^{n+1}\bar Z_{p\eps}|)$ and obtain by collecting the above estimates
 \begin{equation} \label{eq:psi-int}
      \Psi\,=\, \int\limits_{-\eps}^{\beta_\eps} \left( O(\eps)\nabla^{n+1}\Big(\frac{\Delta_\eps}{N_\eps}\Big)+O\big(\frac{1}{\eps}\big)\Psi\right)+O(\eps^{-N})\,,
\end{equation}
 and so by a first application of Gronwall's inequality
 \begin{equation}\label{eq:psi-noint}
     \Psi\,=\, O(\eps^2)\nabla^{n+1}\Big(\frac{\Delta_\eps}{N_\eps}\Big) + O(\eps^{-N})\,.
 \end{equation}
 Next we use the integral representations to obtain the following estimate on the $(n+1)$-st derivative of the fraction $\Delta_\eps/N_\eps$
 \begin{align}\nonumber
     \nabla^{n+1}\Big(\frac{\Delta_\eps}{N_\eps}\Big)\,
     &=\, O\big(\frac{1}{\eps^2}\big)\nabla^{n+1}\bar U_\eps+
      O\big(\frac{1}{\eps}\big)\Big(\nabla^{n+1}\dot{\bar U}_\eps+\nabla^{n+1} \bar Z_{p\eps}\Big)+O(1)\nabla^{n+1} \dot{\bar Z}_{p\eps}+O(\eps^{-N})\\
      &=\, \int\limits_{-\eps}^{\beta_\eps} O\big(\frac{1}{\eps}\big) \nabla^{n+1}\Big(\frac{\Delta_\eps}{N_\eps}\Big)\ +O(1)\nabla^{n+1}\dot{\bar Z}_{p\eps}+O(\eps^{-N})\,,
 \end{align}
 where we have used the estimates \eqref{eq:psi-int}, \eqref{eq:psi-noint}. So another appeal to Gronwall's inequality yields
 \begin{equation}\label{eq:est-quot}
     \nabla^{n+1}\Big(\frac{\Delta_\eps}{N_\eps}\Big)\,=\,
     O(1)\nabla^{n+1}\dot{\bar Z}_{p\eps}+O(\eps^{-N})\,.
 \end{equation}
 Inserting this back into the $\Psi$-estimate \eqref{eq:psi-noint} we find
 \begin{equation}\label{eq:psi-final}
  \Psi\,=\, O(\eps^2)\nabla^{n+1}\dot{\bar Z}_{p\eps} + O(\eps^{-N})\,.
 \end{equation}
 Finally, we turn to the term $\nabla^{n+1}\dot{\bar Z}_{p\eps}$, for which we find again from the integral representation \eqref{eq:regtrsf-zi}, and \eqref{eq:regtrsf-z4} (cf.\ \eqref{eq:est-Zp})
 \begin{align}\nonumber
     \nabla^{n+1}\dot{\bar Z}_{p\eps}\,&=\,
      \int\limits_{-\eps}^{\beta_\eps} \left(
       O\big(\frac{1}{\eps^2}\big)\Big(\nabla^{n+1}\bar U_{\eps}+\nabla^{n+1}\dot{\bar U}_{\eps}+\nabla^{n+1}\bar Z_{p\eps}\Big) 
       +O(1)\nabla^{n+1}\Big(\frac{\Delta_\eps}{N_\eps}\Big)\right)
       +O(\eps^{-N})
     \\
     &=\int\limits_{-\eps}^{\beta_\eps}  O(1)\nabla^{n+1}\dot{\bar Z}_{p\eps}+O(\eps^{-N})\,,
 \end{align}
 where we have used \eqref{eq:psi-final} as well as \eqref{eq:est-quot}. So a final appeal to Gronwall's estimate gives $ \nabla^{n+1}\dot{\bar Z}_{p\eps}=O(\eps^{-N})$, and, upon inserting into \eqref{eq:psi-final}, $\Psi=O(\eps^{-N})$, which is the claim.
\end{proof}

Now we finally obtain moderateness of the transformation. Indeed, we have the following more specific result. 

\begin{Proposition}[Moderateness of the transformation]\label{prop:mod}
The net of transformations $(T_\eps)_\eps$ is moderate and hence $[(T_\eps)_\eps]$ is an element of $\G[\R^5,\R^5]$. 
\end{Proposition}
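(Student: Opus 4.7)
The goal is to establish, for every compact $K \Subset \R^5$ and every multi-index $\alpha=(\alpha_U,\alpha_V,\alpha_Z)\in\N^5$, an estimate of the form
\[
\sup_{(U,V,Z_p)\in K} \big|\partial^\alpha T_\eps(U,V,Z_p)\big| = O(\eps^{-N})
\]
for some $N=N(\alpha,K)$. The $C^0$-bound is already provided by the uniform boundedness lemma. The plan is to reduce the full statement to the $\nabla$-derivative estimates already established in Lemmas \ref{lem:NGD}--\ref{lem:HOD} by splitting the $U$-line into the three natural regions $U\leq -\eps$, $-\eps\leq U\leq\beta_\eps$, and $U\geq \beta_\eps$ dictated by the sandwich structure of $\bar g_\eps$.

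In the \emph{pre-wave} region $U\leq-\eps$, the generalised geodesic $\gamma^\eps_{5D}[V,Z_p]$ coincides by construction with the null geodesic generator $\gamma^g_{5D}$ of (A)dS, whose components \eqref{eq:gen5d} depend \emph{smoothly and $\eps$-independently} on $(V,Z_p)$ and polynomially on $U$. Every derivative is therefore $O(1)$ on $K$. In the \emph{post-wave} region $U\geq\beta_\eps$, $\gamma^\eps_{5D}[V,Z_p]$ is again a classical smooth (A)dS background geodesic, but now with initial data at $\lambda=\beta_\eps$ inherited from the wave-zone solution. By smooth dependence of solutions of the $\eps$-independent background geodesic equation on initial data, derivatives of $T_\eps$ are given by a universal polynomial expression in $\beta_\eps$, in $\gamma^\eps_{5D}(\beta_\eps)$, $\dot\gamma^\eps_{5D}(\beta_\eps)$, and in their $(V,Z_p)$-derivatives. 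The implicit function theorem applied to the defining equation $\bar U_\eps(\beta_\eps)=\eps$ together with $\dot{\bar U}_\eps(\beta_\eps)\to 1$ propagates the wave-zone estimates through $\beta_\eps$, so the required $O(\eps^{-N})$-bound in this region reduces to the estimates at the right endpoint of the wave zone.

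Inside the wave zone $-\eps\leq U\leq \beta_\eps$ the purely spatial derivatives $\partial_V^{\alpha_V}\partial_{Z_p}^{\alpha_Z}$ of the components $\bar U_\eps$, $\bar Z_{p\eps}$ and their $U$-derivatives are exactly what Lemma \ref{lem:HOD} controls. For the $V$-component $\bar V_\eps$, which decouples from the rest of the system, one integrates \eqref{eq:regtrsf-u} (the $\bar V_\eps$ line) after all other components have been estimated; each term on the right-hand side is a product of $H$, $\delta_\eps$ (or $\delta'_\eps$), and factors of $(\bar U_\eps,\dot{\bar U}_\eps,\bar Z_{p\eps},\dot{\bar Z}_{p\eps})$, so the spatial derivatives inherit bounds of type $O(\eps^{-N})$ from Lemma \ref{lem:HOD} by the Leibniz rule. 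The $U$-derivatives of all components are then treated by differentiating the ODE system \eqref{eq:geos}: every $\partial_U$ can produce at most one additional $\delta^{(k)}_\eps$, which contributes one extra factor $\eps^{-1}$, while the remaining factors are already moderate by the preceding step. Mixed derivatives $\partial_U^{\alpha_U}\partial_V^{\alpha_V}\partial_{Z_p}^{\alpha_Z}$ are handled by a straightforward induction on $|\alpha|$: one commutes the spatial derivatives past $\partial_U$, applies Leibniz, and bounds every resulting term using Lemmas \ref{lem:NGD}, \ref{lem:FOD}, \ref{lem:HOD} and the just-obtained bounds on the lower-order mixed derivatives.

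The main obstacle is the combinatorial bookkeeping for arbitrary-order mixed derivatives, not any new analytic difficulty: the geometry of the argument is entirely fixed by the sandwich-wave decomposition of $U$-space, and the delicate $\eps$-tracking has already been carried out in Lemma \ref{lem:HOD}. Since moderateness only requires \emph{some} $N$, no sharpness of the exponent is needed, so a crude induction on $|\alpha|$ suffices. The matching of the three $U$-regions at $U=-\eps$ and $U=\beta_\eps$ is transparent because $\gamma^\eps_{5D}[V,Z_p]$ is a classical $C^\infty$-solution of a smooth ODE for each fixed $\eps$; the three pieces are simply the restrictions of this smooth global solution to disjoint intervals. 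Combining the three regional estimates therefore yields the claim.
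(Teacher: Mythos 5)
Your proposal is correct and follows essentially the same route as the paper: restrict the real work to the wave zone $-\eps\le U\le\beta_\eps$, invoke Lemma \ref{lem:ub} for c-boundedness, Theorem \ref{thm:ex+un} for the $U$-derivatives, Lemma \ref{lem:HOD} for the spatial and mixed spatial derivatives, handle mixed $U$-$V$-$Z_q$-derivatives by iterating the differential equation, and obtain the $\bar V_\eps$-component by integration after the decoupled system is controlled. The only difference is that you spell out the matching at $U=\beta_\eps$ (smooth dependence on the data inherited from the wave zone, with the implicit function theorem controlling $\beta_\eps$ as a function of $(V,Z_p)$), a step the paper subsumes under the remark that outside the strip the solution defaults to a classical smooth background geodesic.
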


\begin{proof}
Recall that we only have to argue inside the regularised wave zone $-\eps\leq\bar U_\eps\leq\beta_\eps$ since outside of it $\gamma^\eps_{5D}$ coincides with classical smooth solutions (depending smoothly on $\eps$). The c-boundedness (in this strip) was established in Lemma \ref{lem:ub} and the moderateness estimates for $\partial_U\gamma^\eps_{5D}[V,Z_p](.)$ are due to Theorem \ref{thm:ex+un}. The moderateness estimates for the $V$- and $Z_q$-derivatives of $\bar U_\eps$, and $\bar Z_{p\eps}$, as well as their mixed $V$-$Z_q$-derivatives have been established in Lemma \ref{lem:HOD}. The mixed $U$-$V$-$Z_q$-derivatives follow suit by iteratively using the differential equation\footnote{Observe that it was precisely the tricky point in the proofs above that one cannot use the differential equation beyond the integral representation when estimating the $V$-$Z_q$-derivatives.}.

Finally, since the $V$-equation is decoupled from the system and $\bar V[V,Z_p](.)$ is obtained simply by integration of the other components its moderateness is a consequence of moderateness of $(\bar U,\bar Z_p)$ (and the well-definedness of the respective operations in $\G[\R^n,\R^m]$). 
\end{proof}

\subsection{$T$ as generalised diffeomorphism}

We will now set out to show that the transformation \eqref{eq:ggwd}, i.e.,   
\begin{equation}
    (U,V,Z_p)\mapsto[T_\eps](U,V,Z_p)=[\gamma^\eps_{5D}][V,Z_p](U)\, \in \G[\R^5,\R^5]\,,
\end{equation}
with its components given explicitly by \eqref{eq:regtrsf-u} - \eqref{eq:regtrsf-z4} gives rise to a locally invertible generalised function $T=[(T_\eps)_\eps]$ on some open set containing the impulsive surface. Hence we will  call it a \emph{generalised diffeomorphism} or \emph{generalised coordinate transformation}. To do so we will extend the results of the $\Lambda=0$-case of \cite{KS:99} and, in particular, its more mathematically structured presentation in \cite{EG:11}. In fact, inspired by \cite{EG:11} we will decompose the transformation in a convenient way by splitting $T_\eps$ into a `singular' and a `convergent' part. 


To begin with, fix an open, relatively compact set $W\subseteq\R^5$, which will be specified further later, and observe that by Proposition \ref{prop:mod} $(T_\eps)$ is moderate and c-bounded and therefore indeed $T:=[(T_\eps)_\eps]\in\G[W,\R^n]$.
%
We decompose the $\bar V_\eps$-component into the initial data term $\tilde V_\eps$ and the integral term, which we label as $h_\eps$, so that we may write 
\begin{equation}
    \bar V_\eps = \tilde V_\eps + h_\eps\,.
\end{equation}
To be precise, we have 
\begin{align*}
    \tilde V_\eps &=  (1-\beta U)V \,,\\
    h_\eps &=  \frac{1}{2}\, \int\limits_{-\eps}^U\int\limits_{-\eps}^s H(\bar Z_{p\eps})(r)\,\delta'_\eps\big(\bar U_\eps(r)\big)\, \dot{\bar{U}}_\eps^2(r)\ \dd r\dd s \nonumber \\
  &\qquad +\int\limits_{-\eps}^U\int\limits_{-\eps}^s\delta^{pq}\, H_{,p}\big(\bar Z_{w\eps}(r)\big)\, \delta_\eps\big(\bar U_\eps(r)\big)\, 
  \dot{\bar Z}_{q\eps}(r)\, \dot{\bar U}_\eps(r)\ \dd r\dd s \\
  &\qquad - \int\limits_{-\eps}^U\int\limits_{-\eps}^s \frac{\Delta_\eps(r)}{N_\eps(r)}\ 
  \Big(\bar V_\eps(r)+H\big(\bar Z_{p\eps}(r)\big)\, \delta_\eps\big(\bar U_\eps(r)\big)\, \bar U_\eps(r)  \Big) 
  \ \dd r\dd s\,.
\end{align*}
Note that while $h_\eps$ does not converge, we have that $h_\eps =O(1)$, cf.\ the proof of Proposition 4.1 in \cite{SSLP:16} and in particular equations (4.3) and (4.4). At this point we define the converging sequence $s_\eps(U,V,Z):=(\bar U_\eps, \tilde V_\eps, \bar Z_{i\eps},\bar Z_{4\eps})$. 


We will use \cite[Prop.\ 3.16 and Thm.\ 3.59]{Erl:07} in order to establish injectivity of $T_\eps$, and therefore we need to find the asymptotic behaviour of $DT_\eps$ and $Ds_\eps$. More specifically, we need to estimate the behaviour of their determinant and all principal minors. The Jacobian of $s_\eps$ is 

\begin{equation}\label{eq-Dseps}
  Ds_\eps=  \left( \begin{array}{ccccc}
       1&0&0&0  \\
       \frac{\Lambda}{3} \frac{a}{Z_4+a} V^2 &1+\frac{2\Lambda}{3} \frac{a}{Z_4+a} U V&0&-\frac{\Lambda}{3} \frac{a}{(Z_4+a)^2} U V^2 \\
       \frac{\Lambda}{3} \frac{a}{Z_4+a} V Z_i & \frac{\Lambda}{3} \frac{a}{Z_4+a} U Z_i &1+ \frac{\Lambda}{3}\frac{a}{Z_4+a} U V 
       & -\frac{\Lambda}{3}  \frac{a}{(Z_4+a)^2} U V Z_i\\
       \frac{\Lambda}{3} a V & \frac{\Lambda}{3} a U&0&1\\
    \end{array}\right) + B\,,
\end{equation}
where $B$ is a matrix with all entries $O(\eps)$, as we will see in Lemma \ref{lem-asy} below.
For the following we denote 
\begin{equation}
\Omega:= \{ (U,V,Z)\in \R^5:\, \left|\frac{\Lambda}{3} \frac{a}{a+Z_4} V U\right|\leq\frac{1}{8}\}\,.
\end{equation}

\begin{Lemma}
    There is $\eps_0>0$ such that on any closed rectangular subset of $\Omega$ and for all $\eps< \eps_0$, $s_\eps$ is injective.
\end{Lemma}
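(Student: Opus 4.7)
The plan is to compare $s_\eps$ with the smooth `undisturbed' map $s_0$ obtained by dropping the integral corrections in \eqref{eq:regtrsf-u}--\eqref{eq:regtrsf-z4}, and to combine an exact determinant computation with the $O(\eps)$-smallness of $B$ in a mean value-type perturbation argument.

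First, I recognise the leading matrix in \eqref{eq-Dseps} as the Jacobian $Ds_0$ of
\[
s_0(U,V,Z_p) \;=\; \Big(U,\; V+\tfrac{\Lambda aUV^2}{3(Z_4+a)},\; Z_i+\tfrac{\Lambda aUVZ_i}{3(Z_4+a)},\; Z_4+\tfrac{\Lambda aUV}{3}\Big),
\]
and I compute its determinant. The $Z_2$- and $Z_3$-columns of $A$ are pivot columns whose only non-zero entry is $1+\frac{\Lambda aUV}{3(Z_4+a)}$, so successive cofactor expansion reduces matters to a $3\times 3$ determinant on $(U,V,Z_4)$ with first row $(1,0,0)$; the remaining $2\times 2$ block evaluates to $(1+\tfrac{\Lambda aUV}{3(Z_4+a)})^2$, giving
\[
\det Ds_0 \;=\; \Big(1+\tfrac{\Lambda}{3}\tfrac{a}{Z_4+a}UV\Big)^{4} \;\geq\; (7/8)^{4} \quad\text{on }\Omega.
\]

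Second, I invert $s_0$ explicitly on $\Omega$. Setting $k:=\tfrac{\Lambda aUV}{3(Z_4+a)}$, the image coordinates satisfy $\tilde V=(1+k)V$, $\tilde Z_i=(1+k)Z_i$ and $\tilde Z_4+a=(1+k)(Z_4+a)$, whence $k=\tfrac{\Lambda aU\tilde V}{3(\tilde Z_4+a)}$ is recovered from the image and $V,Z_i,Z_4$ follow uniquely. The estimate $|k|\leq 1/8$ on $\Omega$ keeps all relevant denominators away from zero, so $s_0|_\Omega$ is a smooth diffeomorphism onto its image, and its restriction to any closed rectangle $R\subset\Omega$ is bi-Lipschitz, i.e., there is $L=L(R)<\infty$ with $|s_0(x)-s_0(y)|\geq L^{-1}|x-y|$ for all $x,y\in R$.

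Third, the forthcoming Lemma \ref{lem-asy} supplies $\sup_{z\in R}\|B(z)\|\leq C(R)\eps$. Since $R$ is convex, combining $Ds_\eps=Ds_0+B$ with the mean value identity yields
\[
s_\eps(x)-s_\eps(y) \;=\; (s_0(x)-s_0(y)) + \int_{0}^{1} B(y+t(x-y))(x-y)\,\dd t,
\]
hence $|s_\eps(x)-s_\eps(y)|\geq(L^{-1}-C(R)\eps)|x-y|$. Setting $\eps_0(R):=1/(2L\,C(R))$ makes the right-hand side strictly positive for $x\neq y$ whenever $\eps<\eps_0(R)$, which is the required injectivity on $R$. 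The main obstacle is thus Lemma \ref{lem-asy} itself: the $V$- and $Z_p$-derivatives of $\bar U_\eps$ and $\bar Z_{p\eps}$ must be controlled against the concentrating factors $\delta_\eps(\bar U_\eps)$ and $\delta_\eps'(\bar U_\eps)$ in the wave strip, a delicate task handled by the nested Gronwall arguments of Lemmas \ref{lem:FOD}--\ref{lem:HOD}.
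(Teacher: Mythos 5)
Your steps 1 and 2 are correct and in fact more explicit than what the paper records: identifying the leading matrix in \eqref{eq-Dseps} as $Ds_0$ for the data-part map, the factorisation $\det Ds_0=(1+k)^4\geq (7/8)^4$ on $\Omega$, and the closed-form inverse via $k=\tfrac{\Lambda a U\tilde V}{3(\tilde Z_4+a)}$ all check out. The paper instead verifies that \emph{all principal minors} of the leading matrix are bounded below by a positive constant on $\Omega$ and invokes the Gale--Nikaido univalence theorem for maps with $P$-matrix Jacobian on rectangular regions. This difference of route matters, because your step 3 contains a genuine gap: the bound $\sup_R\|B\|\leq C(R)\eps$ for $B=Ds_\eps-Ds_0$ is not available. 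Since $s_\eps$ carries the \emph{full} components $\bar Z_{p\eps}$ (only the $V$-slot is replaced by its data part), the first column of $B$ contains
\begin{equation*}
\partial_U\Bigl(\bar Z_{p\eps}-(1-\beta U)Z_p\Bigr)\;=\;\int_{-\eps}^{U}\tfrac{1}{2}H_{,p}\bigl(\bar Z_{q\eps}(r)\bigr)\,\delta_\eps\bigl(\bar U_\eps(r)\bigr)\,\dot{\bar U}_\eps^2(r)\,\dd r\;-\;\int_{-\eps}^{U}\tfrac{\Delta_\eps(r)}{N_\eps(r)}\,\bar Z_{p\eps}(r)\,\dd r\,,
\end{equation*}
which is only $O(1)$: these integrals converge to the nonzero refraction constants $A_p$ of \eqref{eq:ro:consistency}, which for a generic profile are not small compared with your $L^{-1}$. (Lemma \ref{lem-asy}, which you cite, concerns $\bar V_\eps$ and $h_\eps$ and does not supply the bound; the relevant estimates are those of Lemma \ref{lem:FOD}, and for these entries they too give only $O(1)$.) Hence the mean-value estimate $|s_\eps(x)-s_\eps(y)|\geq(L^{-1}-C(R)\eps)|x-y|$ does not follow, and injectivity is not established by your argument as written.

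What saves the situation is that every non-small entry of $B$ sits in the first ($\partial_U$-) column, while the first \emph{row} of $Ds_\eps$ is $e_1+O(\eps)$ (indeed $\partial_V\bar U_\eps=O(\eps^2)$ and $\partial_{Z_q}\bar U_\eps=O(\eps)$ by Lemma \ref{lem:FOD}); so in the determinant and in every principal minor these $O(1)$ entries are always multiplied by $O(\eps)$ factors. The Gale--Nikaido criterion is sensitive only to principal minors and is therefore immune to this anisotropic perturbation, whereas a smallest-singular-value/operator-norm argument is not. To repair your proof you would need to exploit the same structure, e.g.\ first use the $\bar U_\eps$-row of $s_\eps(x)=s_\eps(y)$ to conclude $|U_x-U_y|\leq C\eps\,|(V,Z)_x-(V,Z)_y|$, and only then apply your bi-Lipschitz bound to the remaining $4\times4$ block, where the perturbation genuinely is $O(\eps)$; your explicit inversion of $s_0$ would then still be the useful input. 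A minor shared caveat: your compactness argument (and the uniformity of all $O(\cdot)$ estimates) requires the closed rectangle to be bounded.
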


\begin{proof}
For $\eps$ small enough we only have to consider the matrix $Ds_\eps-B$. Condition \eqref{eq-Dseps} guarantees that all principal minors are bounded below by a fixed positive constant on all of $\Omega$. Thus \cite[Thm.\ 4]{GN:65} gives the claimed injectivity.
\end{proof}

Outside the regularisation strip $\{\alpha_\eps \leq U \leq \beta_\eps\}$ the transformation defaults to a smooth coordinate transform independent of $\eps$ and hence possesses all the properties needed in the following arguments. Thus we may restrict ourselves to the regularisation zone and there $\bar U_\eps = O(\eps)$ holds. As for $T_\eps$, being able to decompose its $V$ component into $\tilde V_\eps + h_\eps$ lets us more easily compute its Jacobian.
First, we focus on the regularisation strip $\{\alpha_\eps \leq U \leq \beta_\eps\}$:

\begin{Lemma}[Asymptotics estimates for $\bar V_\eps$]\label{lem-asy}
We have in the regularisation strip $-\eps\leq r\leq\beta_\eps$ and $(V,Z_p)$ varying in a compact set 
\begin{align} 
\bar V_\eps = O(1),\  
\pt_V \bar V_\eps =O(1), \  \pt_V h_\eps = O(\eps),\  
\pt_{Z_{p}}h_\eps=O(1)\,.
\end{align}
\end{Lemma}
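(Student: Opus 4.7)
The plan is to first reduce the two estimates on $\bar V_\eps$ to the corresponding estimates on $h_\eps$ via the decomposition $\bar V_\eps=\tilde V_\eps+h_\eps$. The data part $\tilde V_\eps=(1-\beta U)V=V+\frac{\Lambda}{3}\frac{a}{Z_4+a}V^2 U$ is a smooth rational expression in $(U,V,Z_4)$. In the regularisation strip $|U|\leq\beta_\eps=O(\eps)$ and for $(V,Z_p)$ in a compact set, it is manifestly $O(1)$, and its $V$-derivative $1+\frac{2\Lambda}{3}\frac{a}{Z_4+a}UV=1+O(\eps)$ is also $O(1)$. Granting $h_\eps=O(1)$ (which is exactly the content of \cite[Prop.\ 4.1 and eqs.\ (4.3), (4.4)]{SSLP:16} as recalled just after the definition of $h_\eps$) and the still-to-prove $\pt_V h_\eps=O(\eps)$, the estimates $\bar V_\eps=O(1)$ and $\pt_V\bar V_\eps=O(1)$ follow at once.

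The core task is therefore to estimate $\pt_V h_\eps$ and $\pt_{Z_p} h_\eps$ by differentiating under the integral sign in the three terms of $h_\eps$ and controlling each resulting piece via the asymptotic estimates of Lemmas \ref{lem:NGD} and \ref{lem:FOD}. The mechanism is as follows. The integrand of $h_\eps$ is at worst $O(1/\eps^2)$, coming from the $\delta'_\eps(\bar U_\eps)\dot{\bar U}_\eps^2$ term; since the double integral is taken over the strip $[-\eps,\beta_\eps]$ of length $O(\eps)$, this indeed yields $h_\eps=O(1)$. When differentiating with respect to $V$, the most dangerous operation is $\pt_V$ falling on $\delta_\eps^{(k)}(\bar U_\eps)$, producing the extra factor $\delta_\eps^{(k+1)}(\bar U_\eps)\,\pt_V\bar U_\eps=O(\eps^{-k-2})\cdot O(\eps^2)=O(\eps^{-k})$, which is two powers of $\eps$ \emph{better} than the original $\delta_\eps^{(k)}=O(\eps^{-k-1})$. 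All other $\pt_V$-actions (on $H(\bar Z_{p\eps})$, on $\dot{\bar U}_\eps$, on $\dot{\bar Z}_{p\eps}$, and on the quotient $\Delta_\eps/N_\eps$) likewise gain at least one power of $\eps$ compared to the original factor, as can be read off from Lemmas \ref{lem:NGD} and \ref{lem:FOD}. Consequently the differentiated integrand is at worst $O(1/\eps)$, and the double integration over the strip yields $\pt_V h_\eps=O(\eps)$.

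For $\pt_{Z_p}h_\eps=O(1)$ the same strategy applies, but with milder smoothing: since $\pt_{Z_p}\bar U_\eps=O(\eps)$ rather than $O(\eps^2)$, each $\pt_{Z_p}$ on $\delta_\eps^{(k)}(\bar U_\eps)$ gives $\delta_\eps^{(k+1)}\pt_{Z_p}\bar U_\eps=O(\eps^{-k-1})$, which is of the same order as the original $\delta_\eps^{(k)}$. Hence the differentiated integrand stays at order $O(1/\eps^2)$, and the double integration still produces $O(1)$. The quotient $\Delta_\eps/N_\eps$ is handled by combining \eqref{eq:nabla-n}, \eqref{eq:nabla-n2} and \eqref{eq:nabla-delta}, using that $N_\eps$ is bounded below by $\sigma a^2/2$ (say) for $\eps$ small, so that $\pt(\Delta_\eps/N_\eps)$ inherits the estimate on $\pt\Delta_\eps$ up to a bounded factor.

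The principal obstacle is not conceptual but notational: one must carefully expand the three integrals defining $h_\eps$, distribute $\pt_V$ or $\pt_{Z_p}$ across each factor, and verify term by term that no combination exceeds the expected order. The delicate term throughout is the one with $\delta'_\eps$, because it already saturates the available singularity budget; however, the gain $\pt_V\bar U_\eps=O(\eps^2)$ from Lemma \ref{lem:FOD} is exactly tailored to absorb the additional derivative on $\delta_\eps$, and an analogous remark applies to the $Z_p$-case. Beyond this bookkeeping, no new ideas beyond those already deployed in the proofs of Lemmas \ref{lem:NGD} and \ref{lem:FOD} are required.
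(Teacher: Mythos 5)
Your overall strategy coincides with the paper's: split off the data term $\tilde V_\eps$, reduce everything to $h_\eps$, split $h_\eps=\tfrac12 I_1+I_2-I_3$, differentiate under the integral sign, and feed in the estimates of Lemmas \ref{lem:NGD} and \ref{lem:FOD}; for $I_1$ and $I_2$ your accounting (worst term $H\delta''_\eps\,\pt_V\bar U_\eps\,\dot{\bar U}_\eps^2=O(\eps^{-3})O(\eps^2)=O(\eps^{-1})$, then two integrations over a strip of length $O(\eps)$) is exactly what the paper does.

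However, there is a genuine gap at the third integral. The integrand of $I_3$ is $\frac{\Delta_\eps}{N_\eps}\bigl(\bar V_\eps+H(\bar Z_{p\eps})\delta_\eps(\bar U_\eps)\bar U_\eps\bigr)$, and the factor $\bar V_\eps$ is \emph{not} among the quantities you list ($H$, $\dot{\bar U}_\eps$, $\dot{\bar Z}_{p\eps}$, $\Delta_\eps/N_\eps$), nor is $\pt_V\bar V_\eps$ (or $\pt_{Z_p}\bar V_\eps$) controlled by Lemmas \ref{lem:NGD} or \ref{lem:FOD} --- those lemmas deliberately exclude the decoupled $V$-component. So the claim that every $\pt_V$-action ``gains at least one power of $\eps$, as can be read off from Lemmas \ref{lem:NGD} and \ref{lem:FOD}'' fails precisely for this factor: $\pt_V I_3$ contains the term $\frac{\Delta_\eps}{N_\eps}\,\pt_V\bar V_\eps=O(\eps^{-1})\bigl(O(1)+\pt_V h_\eps\bigr)$, i.e.\ the quantity you are trying to estimate reappears on the right-hand side. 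Your argument as written is circular at this point. The paper resolves this by isolating exactly this term, obtaining the integral inequality
\begin{equation*}
\pt_V h_\eps \,=\, O(\eps)+\int\limits_{-\eps}^{U}\int\limits_{-\eps}^{s} O\bigl(\tfrac{1}{\eps}\bigr)\,\pt_V h_\eps\,,
\end{equation*}
and closing it with a Gronwall argument over the strip of length $O(\eps)$ to conclude $\pt_V h_\eps=O(\eps)$, and only then deducing $\pt_V\bar V_\eps=O(1)$. The same self-reference (and the same fix) is needed for $\pt_{Z_p}h_\eps=O(1)$. You should add this bootstrap step; the rest of your bookkeeping is sound.
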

\begin{proof}
    The asymptotics of $\bar V_\eps$ have already been noted in Proposition \ref{prop:mod}.
We estimate the $V$-component: As $h_\eps$ consists of three integrals, we split up the calculation, writing $h_\eps = \frac{1}{2}I_1+I_2-I_3$. Then using the above we obtain that
\begin{align*}
        \pt_V I_1 =& \int_{-\eps}^U \int_{-\eps}^s DH(\bar Z_{p\eps}) \pt_V \bar Z_{p\eps} \delta'(\bar U_\eps) \dot{\bar U}_\eps^2 + H(\bar Z_{p\eps}) \delta''(\bar U_\eps) \pt_V \bar U_\eps \dot{\bar U}_\eps^2 
        + H(\bar Z_{p\eps}) \delta'(\bar U_\eps) 2 \dot{\bar U}_\eps \pt_V \dot{\bar  U}_\eps \\
        =&\int_{-\eps}^U \int_{-\eps}^s O(1) O(\eps) O(\frac{1}{\eps^2}) O(1) + O(1) O(\frac{1}{\eps^3}) O(\eps^2) O(1)+ O(1) O(\frac{1}{\eps^2}) O(1) O(\eps)\\
         =& \int_{-\eps}^U \int_{-\eps}^s O(\frac{1}{\eps}) = O(\eps)\,.
\end{align*}
Similarly, we obtain that
\begin{align}
    \pt_V I_2 = 
    \int_{-\eps}^U \int_{-\eps}^s O(\frac{1}{\eps}) =O(\eps)\,.
\end{align}
For the last term we need to be more careful. First we calculate
\begin{align*}
        \pt_V I_3 &= \int_{-\eps}^U \int_{-\eps}^s \frac{\pt_V \Delta_\eps \bigl(\bar V_\eps + H(\bar Z_{p\eps}) \delta(\bar U_\eps) \bar U_\eps\bigr)}{N_\eps}\\
        &+ \frac{\Delta_\eps\bigl(\pt_V \bar V_\eps +  DH(\bar Z_{p\eps}) \pt_V \bar Z_{p\eps} \delta_\eps(\bar U_\eps)\bar U_\eps + H(\bar Z_{p\eps}) \delta_\eps'(\bar U_\eps) \pt_V \bar U_\eps \bar U_\eps + H(\bar Z_{p\eps}) \delta_\eps(\bar U_\eps) \pt_V \bar  U_\eps\bigr)}{N_\eps}\\
        &- \frac{\Delta_\eps \bigl(\bar V_\eps + H(\bar Z_{p\eps})\delta_\eps(\bar U_\eps)\bar U_\eps\bigr)\pt_V N_\eps}{N_\eps^2} \\
        &=\int_{-\eps}^U \int_{-\eps}^s O(\frac{1}{\eps}) + O(\frac{1}{\eps}) \pt_V h_\eps = O(\eps) + \int_{-\eps}^U \int_{-\eps}^s O(\frac{1}{\eps}) \pt_V h_\eps\,.
\end{align*}
Consequently, we get that $\pt_V h_\eps = O(\eps)$ and so $\pt_V \bar V_\eps = O(1)$ as the $V$-derivative of the initial conditions for the $V$-component is $O(1)$.
\end{proof}

At this point we observe that 
\begin{align*}
  DT_\eps&=\\
  &\left( \begin{array}{ccccc}
       1&0&0&0  \\
       \frac{\Lambda}{3} \frac{a}{Z_4+a} V^2 &1+\frac{2\Lambda}{3} \frac{a}{Z_4+a} U V&0&-\frac{\Lambda}{3} \frac{a}{(Z_4+a)^2} U V^2 \\
       \frac{\Lambda}{3} \frac{a}{Z_4+a} V Z_i & \frac{\Lambda}{3} \frac{a}{Z_4+a} U Z_i &1+ \frac{\Lambda}{3}\frac{a}{Z_4+a} U V 
       & -\frac{\Lambda}{3}  \frac{a}{(Z_4+a)^2} U V Z_i\\
       \frac{\Lambda}{3} a V & \frac{\Lambda}{3} a U&0&1\\
    \end{array}\right)\\
    &+
    \left( \begin{array}{ccccc}
       0&0&0&0  \\
      \pt_U h_\eps & 0&\pt_{Z_i} h_\eps & \pt_{Z_4} h_\eps \\
       \pt_U \tilde{Z}_{i\eps} & 0 & 0 & 0\\
       \pt_U \tilde{Z}_{4\eps} & 0&0&1\\
    \end{array}\right) + \tilde B\\
    &= Ds_\eps + \left( \begin{array}{ccccc}
       0&0&0&0  \\
      \pt_U h_\eps & 0&\pt_{Z_i} h_\eps & \pt_{Z_4} h_\eps \\
       \pt_U \tilde{Z}_{i\eps} & 0 & 0 & 0\\
       \pt_U \tilde{Z}_{4\eps} & 0&0&1\\
    \end{array}\right)+ \tilde B =: Ds_\eps + I_\eps + \tilde B\,,
\end{align*}
where the $\tilde{Z}_{q\eps}$ are the $\bar{Z}_{q\eps}$ without the initial conditions, $\tilde B$ comes from $B$ in \eqref{eq-Dseps} and the additional $\pt_V h_\eps$-, $\pt_V \bar{Z}_{q\eps}$-, $\pt_{Z_p} \bar{Z}_{q\eps}$-terms are all $O(\eps)$ by Lemma \ref{lem-asy}. Furthermore, the $(1,2)$-entry of $\tilde B$, which is $\pt_V \bar{U}_\eps$ is even $O(\eps^2)$ (by Lemma \ref{lem-asy}), which is essential in what follows. Thus when calculating all the principal minors of $DT_\eps$ 
we need to observe that
\begin{enumerate}
    \item the factor $\pt_U h_\eps$, which is $O(\frac{1}{\eps})$, is always multiplied by an $O(\eps^2)$-term,
    \item the factors $\pt_U \tilde{Z}_{q\eps}$, which are $O(1)$, are always multiplied by an $O(\eps)$-term, and
    \item the factors $\pt_{Z_p} h_\eps$, which are $O(1)$ by Lemma \ref{lem-asy}, are always multiplied by an $O(\eps)$-term.
\end{enumerate}
Thus, all the principal minors are of the form $1 + O(\eps)$, and hence, in particular, $|\det(DT_\eps)|\geq \eps^N$ for some $N\in\N$. Consequently, $(T_\eps^{-1})_\eps$ is moderate, and from $T_\eps^{-1} \circ T_\eps = \mathrm{id}$ we conclude that $T_\eps^{-1}$ is c-bounded (on the image of $T_\eps$). 

In conclusion, this gives that $T=[(T_\eps)_\eps)]$ is a generalised diffeomorphism. We are, however, interested in the overall transformation \eqref{eq:trsf-overall}, i.e., the precomposition of $T$ with \eqref{CoordTrans_4D_to_5D} and the postcomposition with \eqref{CoordTrans_5D_to_4D}. Since they both are (classical) smooth diffeomorphsims on their respective domains we only need to observe that such a composition clearly is a generalised diffeomorphism. So in total we have:

\begin{Theorem}
    The discontinuous coordinate transform \eqref{eq:trsf-overall} is a generalised diffeomorphism.
\end{Theorem}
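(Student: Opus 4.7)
The plan is to reduce the statement to the already established generalised diffeomorphism property of the middle map $T=[(T_\eps)_\eps]$ in the decomposition \eqref{eq:trsf-overall} and to observe that the outer maps are classical smooth diffeomorphisms. Since the pre- and post-composition via \eqref{CoordTrans_4D_to_5D} and \eqref{CoordTrans_5D_to_4D} are smooth diffeomorphisms on their respective (open, relatively compact) domains, the composition with a generalised diffeomorphism yields a generalised diffeomorphism. This observation is essentially just checking the three conditions in the definition in turn, using that smooth diffeomorphisms trivially satisfy all moderateness, c-boundedness, and openness requirements.

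To see that $T=[(T_\eps)_\eps]$ is itself a generalised diffeomorphism on a suitable open set, I would first invoke Proposition \ref{prop:mod}, which already provides moderateness and c-boundedness, so $T \in \G[W,\R^5]$. Next, I would fix an open, relatively compact rectangular set $W \subseteq \Omega$ containing a neighbourhood of the impulsive surface $\{U=0\}$ and verify that for all $\eps$ smaller than some $\eps_0$, the representative $T_\eps$ restricted to $W$ is a smooth diffeomorphism onto its image. To do so, I would analyse the Jacobian via the decomposition $DT_\eps = Ds_\eps + I_\eps + \tilde B$, where $Ds_\eps$ converges to the Jacobian of the classical coordinate transformation (a smooth diffeomorphism on $\Omega$ by the block-triangular structure of \eqref{eq-Dseps}) and the ``singular'' part $I_\eps + \tilde B$ is controlled by the asymptotic estimates in Lemma \ref{lem-asy}.

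The crux, which is essentially the content of the preceding analysis, is to show that every principal minor of $DT_\eps$ has the form $1 + O(\eps)$ uniformly on $W$. This rests on three asymptotic observations carefully combined: the potentially dangerous factor $\pt_U h_\eps = O(1/\eps)$ is always multiplied by the factor $\pt_V \bar U_\eps = O(\eps^2)$ whenever it appears in a minor; the factors $\pt_U \tilde Z_{q\eps}$ and $\pt_{Z_p} h_\eps$, which are $O(1)$, are always multiplied by $O(\eps)$ terms. A combinatorial check of the zero pattern of $I_\eps$ then yields a uniform lower bound $|\det(DT_\eps)| \geq c > 0$ on $W$ for small $\eps$. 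Together with uniform bounds on $DT_\eps$ from the moderateness estimates, a quantitative inverse function theorem (in the spirit of \cite[Supplement 2.5A]{AMR:88}, or applied through \cite[Thm.\ 4]{GN:65} to the principal minors) gives injectivity of $T_\eps$ on $W$ and yields an open set contained in every image $T_\eps(W)$ for $\eps$ small enough.

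The remaining conditions are then routine: Cramer's rule, combined with moderateness of $DT_\eps$ and the uniform lower bound on $|\det DT_\eps|$, yields moderateness of $(T_\eps^{-1})_\eps$, while c-boundedness of the inverse on the common image set is immediate from $T_\eps^{-1}\circ T_\eps = \mathrm{id}_W$; the composition property then also holds. The main anticipated obstacle is the careful combinatorial analysis of all principal minors of $DT_\eps$, where the precise asymptotic orders must conspire exactly to produce $1 + O(\eps)$ rather than something worse—this is precisely the point at which the improved estimate $\pt_V \bar U_\eps = O(\eps^2)$ from Lemma \ref{lem:FOD} becomes indispensable.
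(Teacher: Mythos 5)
Your proposal follows essentially the same route as the paper: reduce to the middle map $T=[(T_\eps)_\eps]$ since the outer maps \eqref{CoordTrans_4D_to_5D} and \eqref{CoordTrans_5D_to_4D} are classical smooth diffeomorphisms, then establish invertibility of $T_\eps$ via the decomposition $DT_\eps = Ds_\eps + I_\eps + \tilde B$, the observation that all principal minors are $1+O(\eps)$ (hinging on the improved estimate $\pt_V\bar U_\eps = O(\eps^2)$ pairing with the singular factor $\pt_U h_\eps = O(1/\eps)$), the Gale--Nikaido-type injectivity criterion, and moderateness plus c-boundedness of the inverse from the determinant bound and $T_\eps^{-1}\circ T_\eps=\mathrm{id}$. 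This matches the paper's argument in both structure and the key technical points.
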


\section{Discussion}\label{sec:6}
In this work we have studied the notorious discontinuous coordinate transformation  \eqref{trans} relating the distributional and the continuous metric commonly used to describe non-expanding impulsive gravitational waves propagating in (anti-)de Sitter space. Already in \cite{PSSS:19} it was shown that this transformation is geometrically given by the null generators of the (A)dS hyperboloid in a $5D$-description, which jump and are refracted due to the wave impulse. Here we have put this formal analysis on firm mathematical grounds using the nonlinear distributional analysis of the geodesics in these geometries provided in \cite{SSLP:16,SS:18}. More precisely, we have established that a careful geometric regularisation of the transformation leads to a generalised diffeomorphism in the sense of nonlinear distributional geometry. In this way we have also generalised the 
analysis of the far simpler $\Lambda=0$-case of \cite{KS:99a}. We have schematically displayed our procedure in Figure \ref{fig:big-diag}.
\medskip

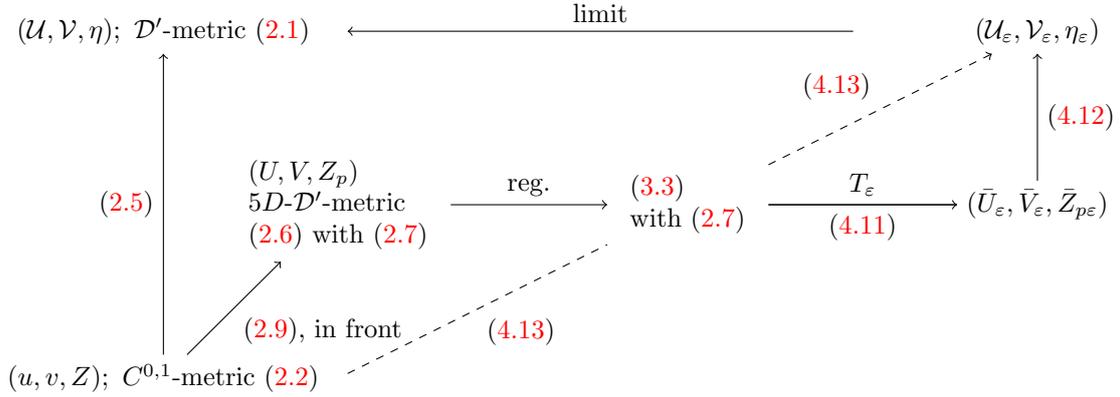
\begin{figure}
 \centering
\begin{tikzpicture}[node distance=2.3cm, auto]
  \node (11) {$({\mathcal U},{\mathcal V},\eta);\ {\mathcal D}'\text{-metric}\  \eqref{4D_imp}$};
  \node (21)  [below of=11] {};
  \node (31) [below of=21] {$(u,v,Z);\ C^{0,1}\text{-metric}\  \eqref{conti}$};
  \node (32) [right of=31] {};
  \node (12) [right of=11] {};
  \node (22) [below of=12] {$\begin{array}{l}(U,V,Z_p)\\ 5D\text{-}{\mathcal D}'\text{-metric}\\ \eqref{5D_imp}\ \text{with}\ \eqref{Constraint_Hyp}\end{array}$};
  \node (23) [right of=22] {};
  \node (24) [right of=23] {$\begin{array}{l}\eqref{5ipp}\\ \text{with}\ \eqref{Constraint_Hyp}\end{array}$};
  \node (25) [right of=24] {};
  \node (26) [right of=25] {$(\bar U_\eps,\bar V_\eps,\bar{Z}_{p\eps}) $};
  \node (16) [above of=26] {$({\mathcal U}_\eps,{\mathcal V}_\eps,\eta_\eps)$};
  \node (15) [left of=16] {};
  %
  \draw[->] (31) to node {\eqref{trans}} (11);
  \draw[->] (31) to node [swap] {\eqref{CoordTrans_4D_to_5D}, in front} (22);
  \draw[->] (22) to node {\text{reg.}} (24);
  \draw[->] (24) to node {$T_\eps$}  (26);
  \draw[ ]  (24) to node [swap] {\eqref{eq:regtrsf}} (26);
  \draw[->] (26) to node [swap]{\eqref{CoordTrans_5D_to_4D}} (16);
  \draw[->] (15) to node [swap] {\text{limit}} (12);
  \draw[-,dashed] (32) to node [swap] {\eqref{eq:trsf-overall}} (24);
  \draw[->,dashed] (24) to node {\eqref{eq:trsf-overall}} (16);
\end{tikzpicture}
\caption{Schematics of the transformations and regularisations employed in this work.}
\label{fig:big-diag}
\end{figure}

Physically speaking our approach consists in viewing the impulsive wave as a limiting case of a sandwich wave, where we have used the $5D$-formalism to define at a sensible regularisation of the spacetime, i.e., as the (A)dS hyperboloid in a $5D$ flat sandwich wave. From this point of view the two forms of the impulsive metric arise as the (distributional) limits of this sandwich
wave in different coordinate systems, once in the $4D$-`continuous system' $(u,v,Z)$, where the metric is \eqref{conti} and the $4D$-`distributional system' $({\mathcal U},{\mathcal V},\eta)$, where the metric is \eqref{4D_imp}.

\subsection*{Acknowledgment}
The authors want to thank Ji\v{r}\'{i} Podolsk\'y for constantly sharing his expertise. C.S., B.S., and R.S.\ acknowledge the kind hospitality of the Erwin Schrödinger International Institute for Mathematics and Physics (ESI) during the workshop Nonregular Spacetime Geometry, where parts of this research were carried out. 

This research was funded in part by the Austrian Science Fund (FWF) [Grant DOI 10.55776/P\-33594]. For open access purposes, the authors have applied a CC BY public copyright license to any author accepted manuscript version arising from this submission. C.S.\ is supported by the European Research Council (ERC), under the European’s Union Horizon 2020 research and innovation programme, via the ERC Starting Grant “CURVATURE”, grant agreement No.\ 802689. R.\v{S}.\ acknowleges support from the Czech Science Foundation Grant No.\ GA{\v C}R 22-14791S.


\subsection*{Data availability satement} Not applicable to this article because it is based on purely theoretical considerations,
without using any datasets or other materials.

\subsection*{Conflict of interest statement}
On behalf of all authors, the corresponding author (R.S.) states that there is no conflict of interest.

\bibliographystyle{abbrv}
\bibliography{ro}

\end{document}